\renewcommand{\mathbf}[1]{\bm{#1}}
\newcommand{\D}[2]{\ensuremath{\frac{\mathrm{d}#1}{\mathrm{d} #2}}}
\newcommand{\inbr}[1]{\ensuremath{\left\{#1\right\}}}
\newcommand{\inb}[1]{\ensuremath{\left[#1\right]}}
\newcommand{\inp}[1]{\ensuremath{\left(#1\right)}}
\newtheorem{theorem}{Theorem}[section]
\newtheorem{observation}[theorem]{Observation}
\newtheorem{corollary}[theorem]{Corollary}
\newtheorem{lemma}[theorem]{Lemma}
\newtheorem{proposition}[theorem]{Proposition}
\theoremstyle{definition}
\newtheorem{definition}[theorem]{Definition}
\newtheorem*{remarks}{Remarks}
\newcommand{\abs}[1]{\left\vert#1\right\vert}
\newcommand{\set}[1]{\left\{#1\right\}}
\newcommand{\eps}{\varepsilon}
\newcommand{\defeq}{:=}
\newcommand{\CommentS}[1]{}
\newcommand{\C}{\mathbb{C}}
\newcommand{\R}{\mathbb{R}}
\renewcommand{\S}{S_\eps}
\newcommand{\Sinv}{S_\eps^{\dagger}}
\newcommand{\fpk}{f_{\beta,k}^{\varphi}}
\newcommand{\fpsk}{f_{\beta,k,s}^{\varphi}}
\newcommand{\fpb}{f_{\beta',k}^{\varphi}}
\newcommand{\interior}[1]{\ensuremath{\mathrm{int}\inp{#1}}}
\renewcommand{\epsilon}{\varepsilon}
\newcommand{\st}[1][]{\ensuremath{\;{#1:}\;}}
\newcommand{\arxiv}[1]{\href{https://arxiv.org/abs/#1}{arXiv:#1}}
\renewcommand{\vec}[1]{\ensuremath{\mathbf{#1}}}
\newcommand{\tsaw}{\ensuremath{T_{\mathrm{SAW}}}}
\begin{document}
\title{Fisher zeros and correlation decay in the Ising model}
\author{Jingcheng Liu \and Alistair Sinclair \and Piyush Srivastava}

\let\bakthefootnote\thefootnote
\let\thefootnote\relax
\footnotetext{Jingcheng Liu, Computer Science Division, UC
 Berkeley. Email: \texttt{liuexp@berkeley.edu}. {Supported by US NSF grant CCF-1815328.}}

\footnotetext{Alistair Sinclair, Computer Science Division, UC
  Berkeley. Email: \texttt{sinclair@cs.berkeley.edu}. {Supported by US
    NSF grant CCF-1815328.}}

\footnotetext{Piyush Srivastava, Tata Institute of Fundamental
  Research, Mumbai. Email:
  \texttt{piyush.srivastava@tifr.res.in}. {Supported by a Ramanujan
    Fellowship of DST, India.}}

\let\thefootnote\bakthefootnote

\date{}
\maketitle
\begin{abstract}

  We study the complex zeros of the partition function
  of the Ising model, viewed as a polynomial in the ``interaction
  parameter''; these are known as \emph{Fisher zeros} in light of their
  introduction by Fisher in 1965.  While the zeros of the
  partition function as a polynomial in the ``field'' parameter have
  been extensively studied since the classical work of Lee and
  Yang, comparatively little is known about Fisher zeros for general graphs.  
  Our main result shows that the zero-field Ising model has
  no Fisher zeros in a complex neighborhood of the entire region of 
  parameters where the model exhibits correlation decay.  In addition
  to shedding light on Fisher zeros themselves, this result also establishes
  a formal connection between two distinct notions of phase transition
  for the Ising model: the absence of complex zeros (analyticity of the
  free energy, or the logarithm of the partition function) and decay of
  correlations with distance.
  We also discuss the consequences of our result for efficient deterministic
  approximation of the partition function.
  Our proof relies heavily on algorithmic techniques, notably Weitz's
  self-avoiding walk tree, and as such belongs to a growing
  body of work that uses algorithmic methods to resolve classical
  questions in statistical physics.

  \vfill
  
  \noindent \emph{\footnotesize An extended abstract of this paper
    comprising an announcement of the results has been accepted for
    presentation at the ``Innovations in Theoretical Computer Science
    (ITCS), 2019'' conference.}
\end{abstract}

\thispagestyle{empty}
\newpage
\setcounter{page}{1}
\section{Introduction}

The Ising model, which originated in the qualitative modeling of phase
transitions in magnets,\cite{isi25} was the first among the wide
class of \emph{spin systems} to be studied extensively in statistical
physics.  Given a graph $G = (V, E)$, the Ising model assigns to each
\emph{configuration} $\sigma: V \rightarrow \inb{+, -}$ of $\inb{+,-}$
\emph{spins} an energy
$H(\sigma) \defeq - \sum_{\inb{u, v} \in E} \sigma(u)\sigma(v)$.  The
weight of each configuration $\sigma$ is then $\exp(-JH(\sigma))$,
where $J$ denotes an \emph{inverse temperature} parameter.  The
\emph{partition function} of the model is given by
\begin{displaymath}
  Z_G(J) = \sum_{\sigma: V \rightarrow \inb{+,-}}\exp(-JH(\sigma)).
\end{displaymath}
The model naturally yields a probability distribution over the configurations, 
known as the \emph{Gibbs measure}, given by
$\mu_{G,J}(\sigma) = \frac{1}{Z_G(J)}\exp(-JH(\sigma))$. The setting
$J > 0$, in which neighboring vertices in the graph tend to have
similar spins, is called \emph{ferromagnetic}, while the setting
$J < 0$ is called \emph{anti-ferromagnetic}.

In this paper, we will find it more convenient to work with an
equivalent combinatorial view of the Ising model as a
probability distribution over the \emph{cuts} of the graph~$G$.  In this
view, one replaces the inverse temperature parameter $J$ with an
interaction parameter $\beta = \exp(-2J) > 0$, so that the weight $w(\sigma)$
assigned by the model to a configuration
$\sigma: V \rightarrow\inb{+, -}$ is given by
$$ w_{G,\beta}(\sigma) = \beta^{\abs{\inbr{e =(u, v) \in E \st
      \sigma(u) \neq \sigma(v)}}}.
$$  
Here, $\sigma$ is viewed as describing the cut in the graph between
spin-`$+$' and spin-`$-$' vertices.  As before, the associated
Gibbs measure assigns probability $\mu_{G, \beta}(\sigma) \defeq
\frac{1}{Z_G(\beta)}{w_{G,\beta}(\sigma)}$ to each
configuration~$\sigma$.  The normalizing factor here is the {\it
  partition function}, defined as
\begin{equation}
  Z_G(\beta) \defeq \sum_{\sigma : V \rightarrow \inbr{+, -}}
  w_{G,\beta}(\sigma) = \sum_{k=0}^{|E|} \gamma_k\beta^k,   \label{eq:1}
\end{equation}
where $\gamma_k$ is the number of $k$-edge cuts in~$G$.  Note that
$Z_G(\beta)$ is a polynomial in~$\beta$ with positive coefficients.
We also sometimes consider graphs in which certain vertices are
\emph{pinned} to `$+$' or `$-$' \emph{spins}.  For such a graph, we
restrict the sum in the definition of $Z_G$ to those configurations
$\sigma$ in which these vertices have the spin determined by their
pinning.

In physical terms, the
parameter $\beta$ above is a proxy for the 
``interaction strength'', while the graph is a proxy for the physical
structure of the magnet.  Further, in this parameterization,
$\beta > 1$ corresponds to so-called anti-ferromagnetic
interactions (where neighbors prefer to have different spins),
$\beta < 1$ to ferromagnetic interactions (where neighbors prefer to
have the same spins), and $\beta = 1$ to infinite temperature (where
the neighbors behave independently of each other).  We will restrict
our attention throughout to graphs of fixed maximum degree~$\Delta$
(i.e., a bounded number of neighbors per vertex).

Historically, there have been two distinct (though closely related)
mechanisms for defining and understanding phase transitions in
statisical physics.  The first is decay of long-range correlations in
the Gibbs measure.  The second, more classical mechanism is
analyticity of the ``free energy'' $\log Z$ (where $Z$ is the
partition function).  This second notion connects naturally to the
stability theory of polynomials, and in particular to the study of the
location of \emph{complex} roots of the partition function $Z$, even
when only real values of the parameters make physical sense in the
model.  The seminal work of Lee and Yang~\cite{lee_statistical_1952,
  leeyan52} was one of the first, and certainly the best known, to use
this notion.  It is interesting to note that the stability theory of
polynomials has seen a recent surge of interest following the central
role it has played in developments in a wide variety of areas ranging from
mathematical physics to combinatorics and theoretical computer
science:  examples include the resolution of the Kadison-Singer
conjecture \cite{marcus2015interlacing}, proofs of the existence of
Ramanujan graphs \cite{marcus_interlacing_2015}, and progress on
the traveling salesman problem and other algorithmic questions
(see, e.g.,
Refs.~\citenum{anari2014kadison,anari_generalization_2017,straszak_real_2016}).

\par\medskip\noindent
\paragraph{\textbf{Algorithms, phase transitions, and roots of
    polynomials.\ \ }} While the algorithmic consequences of phase
transitions defined in terms of decay of correlations have been well
studied, first in the context of Markov Chain Monte Carlo algorithms
(Glauber dynamics) and more recently in determinstic algorithms that
directly exploit correlation decay (see, e.g. Refs.~\citenum{Weitz,
  bandyopadhyay_counting_2008}), algorithmic use of the information on
complex roots of the partition function originated only recently in
the work of Barvinok (see Ref.~\citenum{barvinok2017combinatorics} for
a survey).  This has led to increased interest in understanding the
relationship between the above two notions of phase transitions.  Such
connections have been the focus of some recent work on the independent
set (or ``hard core lattice gas'') model; notably, connections similar
to the ones in this paper have been explored for that model by Peters
and Regts~\cite{peters17:_sokal}, while related ideas are harnessed in
early work of Shearer~\cite{Shearer}, as later elucidated by Scott and
Sokal~\cite{ScottSokal} and further elaborated by Harvey {\it et
  al.}~\cite{harvey2016computing}, to shed light on the Lov\'asz Local
Lemma.

The motivation for our work here is to take a step towards achieving a
fuller understanding of these connections.  Specifically, we study the
zeros of the Ising partition function (at zero field), viewed as
a polynomial in the interaction parameter.  While the study of
zeros in terms of the fugacity (or field) parameter was famously
pioneered by Lee and Yang~\cite{lee_statistical_1952}, and has given
rise to a well developed theory, very little is known about the zeros
in terms of the interaction parameter, which were first studied in the
classical 1965 paper of Fisher~\cite{fisher65} and are thus known
as ``Fisher zeros''.

Our main result is that the Ising model has no Fisher zeros in a region of the
complex plane that contains the entire interval~$B$ on the positive
real line where correlation decay holds.  Our analysis crucially
exploits the correlation decay property (see, in particular,
\MakeUppercase Proposition\nobreakspace \ref{lem:correlation-decay}
and\nobreakspace Lemma\nobreakspace \ref{lem:strict-contraction}) in
order to understand the Fisher zeros.  Thus, in the particular case of
the zero field Ising model, we are able to establish a tight
connection between correlation decay and the absence of zeros.
Another potentially interesting aspect of this result is the use of
algorithmic techniques associated with correlation decay (notably,
Weitz's algorithm~\cite{Weitz}) to understand a classical concept in
statistical physics.

We now proceed to formally describe our results.  First we identify
the range of the parameter~$\beta$ for which the Ising model is,
in a certain sense, well-behaved on graphs of bounded degree~$\Delta$.
\newcommand{\betainv}{\ensuremath{(\frac{\Delta - 2}{\Delta}, 1) \cup
    (1, \frac{\Delta }{\Delta - 2})}}
\newcommand{\betarange}{\ensuremath{(\frac{\Delta - 2}{\Delta},
    \frac{\Delta }{\Delta - 2})}}
\begin{definition}[\textbf{Correlation decay region}]
  Given $\Delta> 0$, the \emph{correlation decay region} $B  =
  B_\Delta$ for~$\beta$ is the interval $\betarange$.
\end{definition}
The correlation decay region is very well
studied in both physical and algorithmic contexts, and comes from a
consideration of the behavior of the Gibbs measure on trees.
In particular, it corresponds to those $\beta$
for which there is exponential decay of correlations in the Gibbs
measure on any finite subtree of the infinite $\Delta$-regular tree---a
fact which has been used to give a deterministic algorithm for
approximating the partition function of the Ising model for such
$\beta$.\cite{Weitz,zhaliabai09}  On the other hand, Sly and
Sun~\cite{sly12} have shown that for
$\beta > \frac{\Delta}{\Delta-2}$, this approximation problem is
NP-hard under randomized reductions.  In statistical physics, the
correlation decay region describes those $\beta$ for which the
definition of the Gibbs measure given by eq.~\eqref{eq:1} for finite
graphs can be extended in a unique way to a Gibbs measure on the
\emph{infinite} $\Delta$-regular
tree\cite{georgii88:_gibbs_measur_phase_trans}; for this reason, the
correlation decay region is also referred to as the \emph{uniqueness
  region}.

As advertised earlier, our goal is to prove the existence of a region of the
complex plane, containing~$B$, which contains no Fisher zeros.  We state this
now as our main theorem.
\begin{theorem}
  Fix any $\Delta > 0$. For any real
  $\beta \in B:=\bigl(\frac{\Delta - 2}{\Delta}, \frac{\Delta}{\Delta -
      2}\bigr)$, there exists a $\delta>0$ such that for all $\beta'\in \C$
  with $\abs{\beta' - \beta} < \delta$, the Ising partition
  function $Z_{G}(\beta') \neq 0$ for all graphs $G$ of maximum degree
  $\Delta$.  Moreover the same holds even if $G$ contains an arbitrary
  number of vertices pinned to $+$ or $-$ spins.
	\label{thm:main-fisher}
\end{theorem}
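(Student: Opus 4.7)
The plan is to use Weitz's self-avoiding walk tree $\tsaw$ to reduce the problem on an arbitrary bounded-degree graph to a statement about iterates of the Ising tree recurrence, and then to show that when $\beta' \in \C$ is close to a real $\beta \in B$, these iterates lie in a bounded safe region $U \subset \C$ that avoids the problematic values $\{0, -1, -\beta', -1/\beta'\}$. Zero-freeness of $Z_G(\beta')$ then follows by induction on the tree.

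For the reduction, pick any vertex $v$ of $G$ and write $Z_G = Z^+_{G,v} + Z^-_{G,v}$, where the superscript indicates pinning $v$ to $\pm$. Setting $R_{G,v} \defeq Z^+_{G,v}/Z^-_{G,v}$, one has $Z_G = Z^-_{G,v}(1 + R_{G,v})$, so it suffices to show that $R_{G,v} \neq -1$ and $Z^-_{G,v} \neq 0$. By Weitz's identity, $R_{G,v}$ coincides with the analogous ratio at the root of $\tsaw(G,v)$, which on a tree satisfies the Ising recursion
\begin{displaymath}
F_{\beta'}(R_1,\ldots,R_d) = \prod_{i=1}^{d} \frac{\beta' + R_i}{1 + \beta' R_i}, \qquad d \le \Delta.
\end{displaymath}
Since both sides of Weitz's identity are rational functions of $\beta'$ agreeing on a real interval, they coincide wherever the denominators do not vanish, a condition that will be ensured a posteriori by the safe region. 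The problem thus reduces to controlling the complex dynamics of $F_{\beta'}$ on $\tsaw(G,v)$.

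The central input is correlation decay at real $\beta \in B$, which I would use in the form of a strict contraction: there is a smooth potential $\varphi$ and a constant $c < 1$ such that $F_\beta$ is $c$-Lipschitz in the potential-adjusted metric on a real interval $I \subset \R$ of positive reals that contains all tree-ratios arising in Weitz's construction. From this I would build a compact, connected complex neighborhood $U$ of $I$ with $\overline{U} \cap \{0, -1, -\beta, -1/\beta\} = \emptyset$ and $F_\beta(U^d) \subset U$, with slack in the inclusion proportional to $1-c$. By continuity of $F_{\beta'}$ in $\beta'$ (uniformly on $U^d$ and in $d \le \Delta$), there is $\delta>0$ such that for all $\abs{\beta'-\beta}<\delta$ one still has $F_{\beta'}(U^d)\subset U$ and $U$ still avoids $\{0, -1, -\beta', -1/\beta'\}$. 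Inducting up $\tsaw(G,v)$ then places $R_{G,v} \in U$, so in particular $R_{G,v} \neq -1$. A parallel induction gives $Z^-_{G,v} \neq 0$: since each $R_i \in U$ is finite and nonzero, every recursive factor $Z^-_{T_i} + \beta' Z^+_{T_i} = Z^-_{T_i}(1 + \beta' R_i)$ is nonzero (using that $U$ avoids $-1/\beta'$), and at leaves pinned to $+$ the corresponding factor evaluates to the nonzero constant $\beta'$.

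The main obstacle is the construction of the safe region $U$ and the control of its complexification. For ferromagnetic $\beta < 1$ the recurrence has a unique attracting positive fixed point and the analysis is relatively clean, but the antiferromagnetic case $\beta > 1$ is delicate: correlation decay inside $B$ requires a non-trivial choice of potential $\varphi$, and the contraction factor $c$ degrades to $1$ as $\beta$ approaches the boundary $\Delta/(\Delta-2)$. Hence $U$ must be shaped so that its transverse width scales with the contraction slack $1-c$, and one must verify carefully that the perturbed pole $-1/\beta'$ and the value $-1$ stay at a definite distance from $U$. A secondary technical point is that the leaves of $\tsaw(G,v)$ carry boundary ratios in $\{0, 1, \infty\}$ coming from pinnings and from the SAW cycle-breaking convention, and these lie outside $U$; however, a direct computation shows that a single application of $F_{\beta'}$ sends any tuple of such boundary values into the interior of $U$, so the inductive argument can be seeded one level above the leaves.
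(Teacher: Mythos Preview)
Your high-level strategy is the same as the paper's: reduce via Weitz's SAW tree to a dynamical statement about the recursion $F_{\beta'}$, construct a complex invariant ``safe region'' using the real contraction coming from correlation decay, and perturb to nearby complex $\beta'$.  Two points need attention.

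First, your ``parallel induction'' for $Z^-_{G,v}\neq 0$ is not quite right as written.  What you argue is that $Z^-_{T,\rho}\neq 0$ on the SAW tree $T=\tsaw(G,v)$, by propagating nonvanishing of the factors $Z^-_{T_i}(1+\beta' R_i)$.  But $Z^-_{G,v}$ and $Z^-_{T,\rho}$ are different polynomials; Weitz's identity only equates the \emph{ratios} $R_{G,v}=R_{T,\rho}$ as rational functions, and this does not rule out a common zero of $Z^+_{G,v}$ and $Z^-_{G,v}$ at $\beta'$.  The clean fix (and what the paper does) is to run the outer induction on the number of \emph{unpinned} vertices of $G$: then $Z^{\pm}_{G,v}$ are themselves Ising partition functions on $G$ with one more pinning, hence nonzero by the inductive hypothesis, and you only need the SAW tree to conclude $R_{G,v}\neq -1$.

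Second, and more substantively, you correctly flag the construction of $U$ as the main obstacle but do not supply the idea that makes it work.  A univariate contraction for $x\mapsto h_\beta(x)^k$ in the $\log$-potential does not by itself yield $F_\beta(U^d)\subset U$ for the \emph{multivariate} map $F_\beta(x_1,\dots,x_k)=\prod_i h_\beta(x_i)$.  The paper's key device is to choose $U$ so that $\log h_{\beta'}(U)$ is \emph{convex}; then $\log F_{\beta'}(\vec x)=\sum_i\log h_{\beta'}(x_i)$ is $k$ times an average of points of $\log h_{\beta'}(U)$, forcing $F_{\beta'}(U^k)=f_{\beta',k}(U)$ for the univariate $f_{\beta',k}(x)=h_{\beta'}(x)^k$, after which the univariate contraction suffices.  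Concretely, the paper takes $U=h_{\beta'}^{-1}(\exp C_2)$ for an explicit convex set $C_2$, and then has to check that $\log U$ is (approximately) a rectangle symmetric about $0$ so that the contraction really maps it into itself and absorbs the $\beta'^s$ shifts coming from the $\{0,\infty\}$ boundary data.  These two simultaneous shape constraints on $U$ (convexity of $\log h_{\beta'}(U)$; near-rectangularity of $\log U$) are not automatic, and the paper handles them via a first-order, non-analytic approximation $q_\beta$ to the change of coordinates $\log\circ h_\beta^{-1}\circ\exp$.  Your sketch would benefit from naming this convexity reduction explicitly; without it the step ``$F_\beta(U^d)\subset U$ with slack $1-c$'' is the whole difficulty, not a consequence of the real contraction.
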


\begin{remarks}
  {\bf (1)}\ \ It is worth noting that the choice of $\delta$ does not
    depend on the size of the graph, only on $\Delta$ and~$\beta$.  
    In particular, given any $\delta_1 > 0$, one can choose
    $\delta > 0$ such that, for all $\beta'$ in a complex neighborhood
    of radius $\delta$ around the closed interval
    $[\frac{\Delta-2}{\Delta} + \delta_1, \frac{\Delta}{\Delta-2} -
    \delta_1]$, $Z_G(\beta')$ is non-zero for all graphs of degree at
    most $\Delta$. 
\par\smallskip\noindent
  {\bf (2)}\ \ For the case of the Ising model, the above theorem establishes
    a connection between the two notions of phase transition discussed above.
    Namely, for the zero-field Ising model, it shows that decay of correlations on the
    $\Delta$-regular tree also implies the
    absence of Fisher zeros for finite graphs of degree at most~$\Delta$,
    and hence the analyticity of the free energy for appropriate
    infinite graphs (i.e., those of maximum degree at most $\Delta$ and
    of subexponential growth, such as regular
    lattices).
\end{remarks}
\par\medskip\noindent
\paragraph{\textbf{Discussion.\ \ }}
While there are some results in the literature on Fisher zeros in
the case of specific regular lattices~(see, e.g.,
Refs.~\citenum{lu_density_2001} and \citenum{kim_partition_2008}), to
the best of our knowledge, the previous best general result on the
Fisher zeros of the Ising model appears in the work of Barvinok and
Sober\'{o}n~\cite{BarvinokSoberon16a}, who showed that $Z_G(\beta)$ is
non-zero if $\abs{\beta - 1}< c/\Delta$, where $\Delta$ is the maximum
degree of $G$, and $c$ can be chosen to be $0.34$ (and as large as
$0.45$ if $\Delta$ is large enough).  While this result provides a
disk around $1$ in which there are no Fisher zeros, it cannot
guarantee the absence of Fisher zeros in a neighborhood of the
correlation decay region~$B$ (which would require at least that
$c \geq 2 - o_\Delta(1)$).  Our Theorem~\ref{thm:main-fisher}
therefore strengthens this result to a neighborhood of the entire
correlation decay region~$B$.\footnote{Technically the results are
  incomparable in the sense that, while our results cover a much
  larger portion of the real line than that in
  Ref.~\citenum{BarvinokSoberon16a}, the diameter of the disk centered
  around $1$ in the region of Ref.~\citenum{BarvinokSoberon16a} may be
  larger than the radius guaranteed by our result.}

Our main theorem on Fisher zeros can also be combined with the
techniques of Barvinok~\cite{barvinok2017combinatorics} and Patel and
Regts~\cite{patel_deterministic_2016} to give a new deterministic
polynomial time approximation algorithm for the partition function of
the ferromagnetic Ising model with zero field on graphs of degree at
most $\Delta$ when $\beta \in \betarange$.  In particular, 
combining Theorem~\ref{thm:main-fisher} with Lemmas 2.2.1 and 2.2.3 of
Ref.~\citenum{barvinok2017combinatorics} (see also the discussion at the
bottom of page 27 therein) and the proof of Theorem 6.1 of
Ref.~\citenum{patel_deterministic_2016}, we obtain the following corollary:
\begin{corollary}
  Fix a positive integer $\Delta$ and $\delta > 0$.  There exist
  positive constants $\delta_1 > 0$ and $c$ such that for any complex
  $\beta$ with
  $\Re(\beta) \in \bigl[\frac{\Delta-2}{\Delta} + \delta,
    \frac{\Delta}{\Delta-2} - \delta\bigr]$ and
  $\abs{\Im(\beta)} \leq \delta_1$, the following is true.  There
  exists an algorithm which, on input a graph $G$ of degree at most
  $\Delta$ on $n$ vertices, and an accuracy parameter $\epsilon > 0$,
  runs in time $O(n/\epsilon)^c$ and outputs $\hat{Z}$ satisfying
  $\bigl|\hat{Z} - Z_G(\beta)\bigr| \leq \epsilon \abs{Z_G(\beta)}$.
\end{corollary}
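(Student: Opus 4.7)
The plan is to combine the zero-free region guaranteed by Theorem~\ref{thm:main-fisher} with Barvinok's polynomial interpolation method and the Patel--Regts algorithm for computing the low-order coefficients of bounded-degree graph polynomials. Since the cited results handle the main algorithmic work as black boxes, the proof is essentially a matter of verifying that the hypotheses of each are met.

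First, I would apply Theorem~\ref{thm:main-fisher} (in its uniform form, Remark~(1)) to the compact interval $I_\delta := [\frac{\Delta-2}{\Delta} + \delta, \frac{\Delta}{\Delta-2} - \delta]$ to obtain a $\delta_1 = \delta_1(\Delta,\delta) > 0$ such that $Z_G(z) \neq 0$ for every $z$ in the closed rectangle $R := \{z \in \C : \Re(z) \in I_\delta,\ |\Im(z)| \leq \delta_1\}$ and every graph $G$ of maximum degree $\Delta$. This rectangle contains both the reference point $z_0 = 1$ (where $Z_G(1) = 2^n$ is trivially known) and the target point $\beta$ described in the corollary.

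Next, I would invoke Barvinok's interpolation framework (Lemmas~2.2.1 and~2.2.3 of Ref.~\citenum{barvinok2017combinatorics}): because $\log Z_G$ is analytic on an open neighborhood of the simply connected region $R$, there is a polynomial reparameterization $\phi$ of the unit disk into $R$ sending $0 \mapsto 1$ and some $t^\ast \in (0,1) \mapsto \beta$, such that the Taylor polynomial of $\log Z_G \circ \phi$ at~$0$ of degree $k$ approximates $\log Z_G(\beta)$ to within $C \cdot \deg(Z_G) \cdot \rho^k$, where $C$ and $\rho \in (0,1)$ depend only on the geometry of $R$, i.e., only on $\Delta$ and $\delta$. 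Since $\deg(Z_G) \leq \Delta n/2$, taking $k = O(\log(n/\epsilon))$ yields an additive approximation of $\log Z_G(\beta)$ to within $\epsilon/2$, hence a multiplicative $(1 \pm \epsilon)$ approximation of $Z_G(\beta)$.

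Finally, the first $k$ Taylor coefficients of $\log Z_G \circ \phi$ at~$0$ can be recovered (via Newton's identities and the known coefficients of~$\phi$) from the first $k$ coefficients of the polynomial $Z_G(1 + u)$ in $u$. Each such coefficient is a weighted enumeration of subgraphs of $G$ with at most $k$ edges, and the proof technique of Theorem~6.1 of Ref.~\citenum{patel_deterministic_2016}---which expresses these coefficients as a finite linear combination of counts of connected subgraphs of size at most $k$, and then enumerates such subgraphs in a bounded-degree graph---gives a running time of $n \cdot h(\Delta)^k$ for some function $h$. Combining with the previous step yields a total running time of $n \cdot h(\Delta)^{O(\log(n/\epsilon))} = (n/\epsilon)^{c(\Delta,\delta)}$, as claimed. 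The genuine obstacle in this chain is really Theorem~\ref{thm:main-fisher} itself, which supplies the zero-free rectangle $R$; the only additional routine check is that the Ising partition function, after the shift $\beta \mapsto 1 + u$, has coefficients expressible via small connected subgraph counts, which follows directly from its standard subgraph expansion together with the bounded-degree assumption.
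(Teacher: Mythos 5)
Your proposal is correct and takes essentially the same route as the paper: the paper states the corollary with a one-sentence pointer to Theorem~\ref{thm:main-fisher}, Lemmas~2.2.1 and 2.2.3 of Barvinok's book, and the proof of Theorem~6.1 of Patel--Regts, and you have accurately unpacked that chain of references. The only cosmetic caveat is that when $\delta$ is large enough that $1 \notin I_\delta$, you should instead apply Remark~(1) with a smaller $\delta' < \delta$ chosen so that the resulting zero-free rectangle contains both the reference point $1$ and the target $\beta$; this does not change the argument.
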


For real $\beta$ in the same range, a deterministic algorithm with the
above properties, based on correlation decay,
was already analyzed in Ref.~\citenum{zhaliabai09}.
However, our extension to complex values of the parameter is of independent
algorithmic interest in light of the fact that algorithms for approximating the
Ising partition function at complex values of the parameters have
applications to the classical simulation of restricted models of quantum
computation.\cite{mann_approximation_2018}

Finally, we emphasize that in contrast to most other recent
applications of Barvinok's method (e.g.,
Refs.~\citenum{patel_deterministic_2016,BarvinokSoberon16b,BarvinokSoberon16a,barvinok_computing_2015,liu2017ising-full}),
where the required results on the location of the roots of the
associated partition function are derived without reference to
correlation decay, the algorithmic version of correlation decay is
crucial to our proof.  Indeed, implicit in our proof is an analysis of
Weitz's celebrated correlation decay algorithm~\cite{Weitz} (proposed
originally for the independent set, or ``hard core'', model, and
analyzed by Zhang, Liang and Bai~\cite{zhaliabai09} for the Ising
model in the case of real positive $\beta\in B$) for the Ising model
with \emph{complex}~$\beta'$ close to $\beta\in B$.  Thus, as
mentioned earlier, our work shows that Weitz's algorithm can be viewed
as a bridge between the ``decay of correlations'' and ``analyticity of
free energy'' views of phase transitions.  We note also that our work
is close in spirit to recent work of Peters and
Regts~\cite{peters17:_sokal} (see also Ref.~\citenum{bencs_note_2018}), who
employ correlation decay in the hard core model to prove stability
results for the hard core partition function.

\section{Outline of proof}
\label{sec:outline-proof}
We fix $\Delta$ to be the maximum degree throughout, and let $d = \Delta-1$.
Let $G$ be any graph of maximum degree~$\Delta$.  Our starting point is a
recursive criterion that guarantees that the partition function $Z_G(\beta)$ has no zeros.
For any non-isolated vertex~$v$ of~$G$, let $Z_{G, v}^+(\beta)$
(respectively, $Z_{G, v}^-(\beta)$)
be the contribution to $Z_{G}(\beta)$ from configurations with
$\sigma(v)=+$ (respectively, with $\sigma(v)=-$), so that
$Z_{G}(\beta) = Z_{G, v}^+(\beta) + Z_{G, v}^-(\beta)$.
Define also the ratio $R_{G,v}(\beta) := \frac{Z_{G,
    v}^+(\beta)}{Z_{G, v}^-(\beta)}$.
Now note that $Z_{G, v}^+(\beta)$ and $Z_{G, v}^-(\beta)$ can be seen as Ising partition
functions defined on the same graph $G$ with the vertex $v$
\emph{pinned} to the appropriate spin; i.e., they are partition functions 
defined on a graph with one less unpinned vertex.  Thus we may assume
recursively that neither $Z_{G,v}^+(\beta)$ nor $Z_{G, v}^-(\beta)$ vanishes.
Under this assumption, the condition $Z_{G}(\beta) \neq 0 $ is equivalent
to $R_{G, v}(\beta) \neq -1$.

Our next ingredient is a formal recurrence, due to Weitz\cite{Weitz},
for computing ratios such as $R_{G, v}(\beta)$ in two-state spin systems.
This recurrence is based on the so-called ``tree of self-avoiding walks'' (or 
``SAW tree'') in~$G$, rooted at~$v$, with appropriate boundary conditions
(i.e., initial inputs, or fixed values at the leaves of the tree).  Weitz's recurrence
has been used in the
development of several approximate counting algorithms based on decay
of correlations (see, e.g., Refs.~\citenum{Weitz, zhaliabai09,
  li_correlation_2011, sinclair_approximation_2012}).  We now state a 
  precise version of Weitz's result that is tailored to our application.
\begin{lemma}
  Let $G$ be a graph of maximum degree $\Delta = d + 1$, with some
  vertices possibly pinned to spins `$+$' or `$-$'.  Given
  $\beta \in \C$, define
  $h_\beta(x) \defeq \frac{\beta + x}{\beta x + 1}$.  For integers
  $k \geq 0$ and $s$, define the maps
  \begin{displaymath}
    F_{\beta, k, s}(\vec{x})  \defeq \beta^s \prod_{i=1}^k h_\beta(x_i).
  \end{displaymath}
  Then, the ratio $R_{G, v}(\beta)$ can be obtained by iteratively
  applying a sequence of multivariate maps of the form
  $F_{\beta, k, s}(\vec{x})$ such that, in all but the final application,
  one has $1 \leq k + \abs{s} \leq d$, while for the final application
  one has $1 \leq k + \abs{s} \leq \Delta$, and any
  initial input to these maps is $x_i=1$.
  \label{lem:weitz}
\end{lemma}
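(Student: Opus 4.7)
The plan is to combine the standard Ising tree recurrence for ratios with Weitz's self-avoiding walk tree construction, and read off the prescribed form of the maps from each.

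First I would derive the tree recurrence directly from the edge weights of the Ising model. Suppose $T$ is a tree rooted at $v$, and in $T$ the vertex $v$ has unpinned children $u_1,\dots,u_k$ (each the root of its own subtree $T_{u_i}$), together with $s_+$ and $s_-$ neighbors pinned to $+$ and $-$ respectively. Summing over the spin of $v$ and using that each edge contributes $\beta$ iff its endpoints disagree gives
\begin{align*}
  Z_{T, v}^+ &= \beta^{s_-} \prod_{i=1}^k \bigl(Z_{T_{u_i}, u_i}^+ + \beta\, Z_{T_{u_i}, u_i}^-\bigr), \\
  Z_{T, v}^- &= \beta^{s_+} \prod_{i=1}^k \bigl(\beta\, Z_{T_{u_i}, u_i}^+ + Z_{T_{u_i}, u_i}^-\bigr).
\end{align*}
Dividing and writing $R_{u_i} = Z_{T_{u_i}, u_i}^+ / Z_{T_{u_i}, u_i}^-$, one obtains $R_{T, v}(\beta) = \beta^{s_- - s_+} \prod_i h_\beta(R_{u_i}) = F_{\beta, k, s_- - s_+}(R_{u_1}, \dots, R_{u_k})$, exactly the map appearing in the lemma.

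Next I would invoke Weitz's SAW tree $\tsaw$ rooted at $v$: enumerate self-avoiding walks from $v$ in $G$, and whenever a walk would revisit an already-seen vertex $w$, terminate it by placing a pinned leaf at $w$, the pin being determined by a fixed total ordering on the edges at each vertex of $G$. Weitz's theorem states that $R_{G, v}(\beta) = R_{\tsaw, v}(\beta)$; its proof is an algebraic induction (cut and pin one cycle edge at a time, verifying that the two boundary conditions at the pinned leaves reproduce exactly the contributions of the two possible spins at the cut endpoint) and goes through verbatim for complex $\beta$, since the identity lives in the polynomial ring $\Z[\beta]$.

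The remaining task is a degree count on $\tsaw$. At every non-root vertex of $\tsaw$, one of its $G$-neighbors is its parent, while the others appear in the tree either as unpinned children, as cycle-closing pinned leaves, or as externally pinned neighbors; hence $k + s_+ + s_- \leq \Delta - 1 = d$, giving $k + \abs{s} \leq d$. The root has no parent, so only the weaker bound $k + \abs{s} \leq \Delta$ applies. An unpinned leaf of $\tsaw$ has $Z^+ = Z^- = 1$, so its ratio is $1$, matching the prescribed initial input $x_i = 1$. The main substantive ingredient is Weitz's correctness theorem; for our purposes it can be used essentially as a black box, and the real work of the paper begins after this lemma, in analyzing the behavior of the family $\{F_{\beta,k,s}\}$ for complex $\beta$ near the correlation decay region.
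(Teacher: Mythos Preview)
Your proof is correct and follows essentially the same route as the paper: invoke Weitz's SAW tree to reduce $R_{G,v}$ to a tree recurrence, identify the recurrence with $F_{\beta,k,s}$, and read off the degree bounds from the branching structure of the SAW tree. The only cosmetic difference is that the paper starts from the standard formulation of the recurrence (maps $F_{\beta,k,0}$ with pinned-leaf inputs $0$ and $\infty$) and then absorbs those inputs via the identities $h_\beta(0)=\beta$ and $h_\beta(\infty)=1/\beta$ to obtain $F_{\beta,k-(s_1+s_2),s_1-s_2}$, whereas you derive the $\beta^{s_--s_+}$ factor directly from the edge weights of the pinned neighbors; the two computations are equivalent.
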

\noindent For completeness we sketch a proof of Lemma~\ref{lem:weitz} at the end of this section.

Returning now to the condition $R_{G, v}(\beta) \neq -1$ derived above,
we see from Lemma~\ref{lem:weitz} that a sufficient condition for the
absence of zeros of $Z_G(\beta)$ is the existence of a subset
$D\subseteq \C$ such that $1\in D$, $-1\notin D$, and $D$ is closed
under the recurrence~$F_{\beta,k, s}$ (in the sense that 
$F_{\beta,k, s}$ maps $D^k$ into~$D$).  These properties guarantee
that the recurrence, with initial inputs~1 at the leaves, can never
yield the value~$-1$, and hence that $R_{G, v}(\beta) \neq -1$, so
$Z_G(\beta)\ne 0$.  The main technical content of this paper is to 
prove, under the conditions on~$\beta$ stated in Theorem~\ref{thm:main-fisher},
the existence of such a set~$D$, a result which we formally state as follows.
\begin{theorem}
  Fix a degree $\Delta = d + 1$. For any
  $\beta \in \bigl(\frac{\Delta - 2}{\Delta}, \frac{\Delta}{\Delta -
      2}\bigr)$, there exists $\delta_\beta > 0$ such that, for any
  $\beta' \in \C$ with $\abs{\beta' - \beta}\leq\delta_\beta$, there
  exists a set $D \subseteq \C$ with $1 \in D$, $-1 \not\in D$, and
  \begin{enumerate}[(a)]
  \item $F_{\beta',k, s}(D^k) \subseteq D$ for integers $k \geq 0$ and
    $s$ such that $1 \leq k + \abs{s} \leq d$;
  \item $-1 \notin F_{\beta', k, s}(D^k)$ for integers $k \geq 0$ and
    $s$ such that $1 \leq k + \abs{s} \leq \Delta$.
  \end{enumerate}
  \label{lem:existence-of-D}
\end{theorem}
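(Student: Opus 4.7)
The plan is to reduce the existence of $D$ to a contraction argument on a real invariant interval, together with a small complex thickening. First, for real $\beta \in B$, I would construct a closed real interval $I_\beta \subset (0, \infty)$ containing $1$ with two properties: (i) \emph{invariance}, $F_{\beta, k, s}(I_\beta^k) \subseteq I_\beta$ for every $(k, s)$ with $1 \leq k + |s| \leq d$; and (ii) \emph{strict contraction}, namely the existence of a smooth potential $\varphi: (0, \infty) \to \R$ such that the conjugated map $\fpsk := \varphi \circ F_{\beta, k, s} \circ \varphi^{-1}$ is a strict Lipschitz contraction on $\varphi(I_\beta)^k$ with some constant $\rho_\beta < 1$, uniformly over all $(k, s)$ with $k + |s| \leq d$. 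A natural candidate is $I_\beta = [a_\beta, a_\beta^{-1}]$ for a suitable $a_\beta < 1$; invariance exploits the symmetry $h_\beta(x) h_\beta(1/x) = 1$ together with the bounded shift coming from the factor $\beta^s$, and the strict contraction holds because $\beta$ lies in the interior of the open correlation-decay region $B$. This is essentially the same contraction analysis underlying the known deterministic correlation-decay algorithm for the Ising model, extended to allow pinned neighbors.

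The second step is to thicken $I_\beta$ into a complex set $D$ and verify invariance under the perturbed maps. Let $N_\eta(S)$ denote an open $\eta$-neighborhood of $S \subseteq \C$, and set $D := \varphi^{-1}(N_\eta(\varphi(I_\beta)))$. The map $\fpsk$ depends jointly continuously on $\beta'$ and its arguments, and at $\beta' = \beta$ it is a strict $\rho_\beta$-contraction into $\varphi(I_\beta)$. By uniform continuity (or a bound on the complex derivative on the thickened set), its image on $N_\eta(\varphi(I_\beta))^k$ lies within distance at most $\rho_\beta \eta + o(\eta)$ of $\varphi(I_\beta)$. Choosing $\eta$ small enough, and then $\delta_\beta > 0$ small relative to $(1 - \rho_\beta)\eta$, continuity in $\beta'$ ensures that for every $\beta'$ with $|\beta' - \beta| \leq \delta_\beta$, the perturbed map satisfies $\fpsk(N_\eta(\varphi(I_\beta))^k) \subseteq N_\eta(\varphi(I_\beta))$, equivalently $F_{\beta', k, s}(D^k) \subseteq D$, which verifies condition~(a).

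The remaining properties then follow from sufficient smallness. Membership $1 \in D$ is immediate since $1 \in I_\beta \subseteq D$. For $-1 \notin D$: since $I_\beta$ is a compact subset of $(0, \infty)$, its image $\varphi(I_\beta)$ is a compact real subset of the domain of $\varphi$'s analytic continuation, and for sufficiently small $\eta$ the complex thickening $N_\eta(\varphi(I_\beta))$ is disjoint from $\varphi(-1)$, giving $-1 \notin D$. For condition~(b), $-1 \notin F_{\beta', k, s}(D^k)$ with $1 \leq k + |s| \leq \Delta$: this is slightly stronger than (a) in that it allows one additional factor, but one more application of $h_{\beta'}$ or one more $\beta'$ multiplier applied to a point of $D$ lands in only a mildly enlarged complex neighborhood of $I_\beta$, which can still be kept disjoint from $\{-1\}$ by the same smallness choices of $\eta$ and $\delta_\beta$.

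The hardest part of the argument, and the main place where correlation decay enters essentially, is step~(ii) above: establishing strict, uniform contraction of the real maps $F_{\beta, k, s}$ in a \emph{single} potential $\varphi$, across all $(k, s)$ with $k + |s| \leq d$, including the asymmetric pinned cases $s \neq 0$. The potential $\varphi$ must be chosen so that the contraction bound is robust to complex perturbation, i.e., the induced $\fpsk$ is Lipschitz (in, say, the $\ell_\infty$ sense on $\C^k$) with constant below $1$ on a whole complex neighborhood of $\varphi(I_\beta)$, not merely on the real interval. A further quantitative subtlety is that the contraction margin $1 - \rho_\beta$ degenerates as $\beta$ approaches the endpoints of $B$, forcing the thickening radius $\eta$ and the complex radius $\delta_\beta$ to shrink accordingly; this is consistent with the $\beta$-dependence of $\delta_\beta$ in the theorem statement and with the uniformity-on-compact-subintervals claim in Remark~(1) following Theorem~\ref{thm:main-fisher}.
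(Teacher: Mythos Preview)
Your proposal is correct and in fact follows a route that is more direct than the paper's. The essential difference is that you work with the \emph{multivariate} map $F^\varphi_{\beta',k,s}$ and bound its $\ell_\infty$ Lipschitz constant on $\C^k$ directly, whereas the paper first reduces to the \emph{univariate} recurrence $f_{\beta',k,s}$ via Lemma~\ref{lem:convexity-reduction}. That reduction requires $\log(h_{\beta'}(D))$ to be convex, which forces the paper to define $D$ as the preimage under $h_{\beta'}^{-1}\circ\exp$ of a specially designed convex set $C_2(\beta,\delta,d)$ (eq.~\eqref{eq:7}); it must then verify that $\log D = p_{\beta'}(C_2)$ is invariant under the contraction $f^\varphi_{\beta',k}$, and because $p_{\beta'}$ is awkward this entails the non-analytic approximation $q_\beta$ (eq.~\eqref{eq:9}) and the whole set-approximation apparatus of Section~\ref{sec:appr-funct-sets}. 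Your approach sidesteps all of this: with the explicit choice $\varphi = \log$ (which you should name), one has $F^\varphi_{\beta,k,s}(\vec{y}) = s\log\beta + \sum_{i=1}^k \log h_\beta(e^{y_i})$, so $\sum_i|\partial_{y_i} F^\varphi| \leq k \cdot \tfrac{|1-\beta|}{1+\beta} < 1$ whenever $k\leq d$ and $\beta \in B$, giving the multivariate contraction at once. The real invariant interval can be taken concretely as $I_\beta = [\min(\beta,\beta^{-1})^{d}, \max(\beta,\beta^{-1})^{d}]$, since $h_\beta((0,\infty)) \subset (\min(\beta,\beta^{-1}), \max(\beta,\beta^{-1}))$ and $k + |s| \leq d$ bounds the total exponent. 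What the paper's route buys is the univariate reduction itself, a standard device in correlation-decay arguments, and somewhat finer structural control over $D$; what your route buys is a substantially shorter proof. One minor remark: the step you flag as ``hardest'' (uniform contraction across all $(k,s)$, including $s\neq 0$) is actually immediate once $\varphi = \log$ is fixed, because the $s\log\beta$ term is a pure additive shift in $\varphi$-coordinates and does not affect any Lipschitz constant; the only place $s$ enters nontrivially is the real invariance (i), which you handle correctly.
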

\noindent
At the end of this section, we spell out the details of how to combine Lemma~\ref{lem:weitz}
and Theorem~\ref{lem:existence-of-D} into a proof of our main result, Theorem~\ref{thm:main-fisher}.

The rest of the paper focuses on proving Theorem~\ref {lem:existence-of-D}.  We briefly
sketch our approach here.  The first step is to simplify the problem by working with a 
{\it univariate\/} version of the recurrence~$F_{\beta,k, s}$ defined in Lemma~\ref{lem:weitz}.
The univariate version is defined as $f_{\beta,k,s}(x) := \beta^s h_\beta(x)^k$, and we
can show that it satisfies $F_\beta(D^k) = f_\beta(D)$ for any set $D$ such that
$C:=\log(h_\beta(D))$ is convex in the complex plane.  (Henceforth we will drop the
subscripts $k,s$ for simplicity.)  This means that the set~$D$
we seek in Theorem~\ref {lem:existence-of-D} should be the image of a convex set~$C$
under the map $\log{}\circ h_\beta$.

Next, to enable us to exploit the fact that $\beta$ is in the correlation decay interval
$B=\bigl(\frac{\Delta - 2}{\Delta}, \frac{\Delta}{\Delta - 2}\bigr)$, we further modify 
the univariate recurrence to $f_\beta^\varphi := \varphi\circ f_\beta\circ \varphi^{-1}$, where
$\varphi(x):=\log x$.  This is an example of the use of a so-called ``potential''
function~$\varphi$ in order to smooth a recurrence, as has been useful in
several correlation decay arguments.  The key point here is that, when 
$\beta\in B$, $f_\beta^\varphi$ (unlike $f_\beta$ itself)
is actually a uniform {\it contraction\/} on an appropriate
domain in~$\C$; hence we can conclude that $f_\beta^\varphi(S)\subseteq S$ for
``nice'' sets~$S$ (i.e., $S$ that are convex and symmetric around the origin).
Since the condition $f_\beta(D)\subseteq D$ is equivalent to 
$f_\beta^\varphi(\log D)\subseteq\log D$,
this imposes the further constraint that $\log D$ be a ``nice'' set.

Putting together the constraints in the previous two paragraphs, we
need to construct a suitable convex set~$C$ whose image
$\log(h_\beta^{-1}(\exp(C)))$ is nice; our set $D$ in
Theorem~\ref{lem:existence-of-D} will then be defined as
$h_\beta^{-1}(\exp(C))$ (and this set must include~1 and
exclude~$-1$).  This turns out to be hard to achieve directly due to
the complexity of the map $p:=\log{}\circ h_\beta^{-1} \circ\exp$.
However, we are able to show that one can instead work with a
(non-analytic) approximation of~$p$ under which the image of a natural
convex~$C$ becomes a nice (in fact, rectangular) set.  Moreover, this
holds even for complex~$\beta$ that are sufficiently close to the
region~$B$.  This fact then allows us to push through the analysis and
arrive at a proof of Theorem~\ref{lem:existence-of-D}. Groundwork for
implementing the above strategy is laid in
Section~\ref{sec:preliminaries}.  The different approximations
required by the strategy outlined above lead to various geometric
considerations that are dealt with in
Section~\ref{sec:appr-funct-sets}.  The detailed proof of the theorem
then appears in Section~\ref{sec:finalproof}.

We conclude this overview section with the proofs of Lemma~\ref{lem:weitz}
and Theorem~\ref{thm:main-fisher} promised earlier.  The remainder of the paper will then be devoted
to proving our main technical result, Theorem~\ref{lem:existence-of-D}.

\begin{proof}[Proof of Lemma~\ref{lem:weitz} (Sketch)]
  Except for a few minor differences, this description is exactly the
  same as the version of Weitz's result used for the Ising model in,
  e.g., Refs.~\citenum{zhaliabai09} and
  \citenum{sinclair_approximation_2012}; for completeness, we describe
  the version of Weitz's SAW tree construction used in these
  references in Appendix~\ref{sec:weitzs-self-avoding}.  In the above
  references, only the maps $F_{\beta, k, 0}$ for $1 \leq k \leq d$
  (with at most one final application with $k = \Delta$, at the root
  of the SAW tree) are used, and the initial values come from the set
  $\set{0, \infty}$; these initial
  values are the values of the ratio for single leaf vertices in the
  SAW tree pinned to~$-$  and $+$ respectively, and the maps
  $F_{\beta, k, 0}$ describe how to combine the ratios from $k$ subtrees.
  The version in the lemma follows by noticing that
  $h_\beta(1) = 1, h_\beta(0) = \beta$ and
  $h_\beta(\infty) = 1/\beta$, so that $F_{\beta, k, 0}$ applied to a
  vector $\vec{x}$ with $k$ coordinates, $s_1$ of which are set to $0$
  and $s_2$ to $\infty$, produces the same output as
  $F_{\beta, k - \inp{s_1 + s_2}, s_1 - s_2}$ applied to the vector
  $\vec{x'}$ of $k - \inp{s_1 + s_2}$ coordinates obtained from
  $\vec{x}$ by removing the $0$ and $\infty$ entries. %
\end{proof}

\begin{proof}[Proof of Theorem~\ref{thm:main-fisher}]
As indicated earlier, the induction is on the number of unpinned
vertices, $n$, of~$G$.
For the base case $n=0$, $Z_G(\beta) = \beta^k$, where $k$ is the number of
pairs of adjacent vertices in $G$ that are pinned to different spins.
Therefore, $Z_G(\beta) \neq 0$ unless $\beta=0$.  Next suppose that for
some positive integer $t$, it holds that for every $\beta \in B$,
there exists a $\delta>0$ such that for all $\beta' \in \C$ with
$\abs{\beta' - \beta} < \delta$, $Z_G(\beta') \neq 0$ for all graphs
$G$ of maximum degree $\Delta$ with at most $t$ unpinned vertices.
Now, let $G'$ be any graph of the same maximum degree with $t+1$
unpinned vertices.  Fix any non-isolated vertex $v$ in $G'$, and let
$Z_{G', v}^+(\beta'), Z_{G', v}^-(\beta')$ be the contributions to the
partition function from configurations with $\sigma(v)=+,\sigma(v)=-$,
respectively.  By the induction hypothesis, we know that
$Z_{G', v}^+(\beta') \neq 0, Z_{G', v}^-(\beta') \neq 0$ as they are exactly
the Ising partition function defined on the same graph $G'$ with the
vertex $v$ pinned (thus reducing the number of unpinned vertices to
$t$).  Further, Lemma\nobreakspace \ref {lem:weitz} implies that
$R_{G', v}(\beta') = \frac{Z_{G', v}^+(\beta')}{Z_{G', v}^-(\beta')}$ can be
computed by iteratively applying a sequence of maps of the form
$F_{\beta', k, s}$ for $1 \leq k + \abs{s} \leq d$, followed by at
most one application where $k + \abs{s} = \Delta$, starting with
initial values of $1$.  Part~(a) of Theorem\nobreakspace \ref
{lem:existence-of-D} then implies that the outputs of all except
possibly the final application remain in the set $D$ defined in that
theorem, and part (b) of the theorem implies that the final output,
which is equal to $R_{G', v}(\beta)$ by Lemma~\ref{lem:weitz}, is
not~$-1$.  Since $Z_{G', v}^+(\beta')$ and $Z_{G', v}^-(\beta')$ are
non-zero, this implies that $Z_{G'}(\beta') \neq 0$, completing the
induction.
\end{proof}

\section{Preliminaries}
\label{sec:preliminaries}
\subsection{Tree recurrence and correlation decay}
We consider first the following univariate version of the recurrence
$F_{\beta, k, s}$ defined above:
\begin{equation}
  \label{eq:13}
  f_{\beta, k, s}(x) \defeq \beta^sh_\beta(x)^k,
\end{equation}
where as before $h_\beta(x) \defeq \frac{\beta +x}{\beta x + 1}$.
Both the multivariate and univariate recurrences have been
studied extensively in the literature on the Ising model on trees.  It has
also been found useful to re-parameterize the recurrence in terms of
logarithms of likelihood ratios as follows (see, e.g.,
Ref.~\citenum{lyons_ising_1989}).  Let $\varphi(x) \defeq \log x$ and define
\begin{equation}
  \label{eq:2}
  \fpsk \defeq \varphi \circ f_{\beta,k, s} \circ
  \varphi^{-1} = s \log \beta + k \log h_\beta(e^x).
\end{equation}
One then has the following ``step-wise'' version of correlation decay.\cite{lyons_ising_1989,zhaliabai09}

\begin{proposition} 
  Fix a degree $\Delta = d + 1$ and integers $k \geq 0$ and $s$. If
  $\frac{\Delta - 2}{\Delta} < \beta <\frac{\Delta }{\Delta - 2}$ then
  there exists an $\epsilon > 0$ (depending upon $\beta$ and $d$) such
  that $|{\fpsk}'(x)| < \frac{k}{d}(1 - \epsilon)$ for every $x \in \R$.
  \label{lem:correlation-decay}
\end{proposition}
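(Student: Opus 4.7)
The plan is to reduce the bound on $|(\fpsk)'(x)|$ to an explicit maximization over $y=e^x\in(0,\infty)$, and then show that the resulting maximum is strictly less than $1/d$ precisely under the given range of $\beta$.

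First, since the $s\log\beta$ term is a constant, it drops out of the derivative, and a direct chain-rule computation gives
\[
(\fpsk)'(x) = k\cdot\frac{e^{x}\,h_\beta'(e^{x})}{h_\beta(e^{x})}.
\]
Using $h_\beta'(y) = (1-\beta^2)/(\beta y+1)^2$, this simplifies to
\[
(\fpsk)'(x) = k\cdot \frac{(1-\beta^2)\,y}{(\beta y+1)(\beta+y)}, \qquad y:=e^{x}>0.
\]
So the problem reduces to bounding $g(y):=\frac{y\,|1-\beta^2|}{(\beta y+1)(\beta+y)}$ over $y>0$.

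Next I would maximize $g(y)$. The cleanest way is to observe the symmetry $g(1/y)=g(y)$, which (together with $g\to 0$ at $0$ and $\infty$) suggests the extremum lies at $y=1$. I would confirm this by a direct derivative: a short calculation gives $(\beta y+1)(\beta+y) - y\bigl[\beta(\beta+y)+(\beta y+1)\bigr] = \beta(1-y^{2})$, so $g'(y)$ has the same sign as $1-y^{2}$. Hence $g$ attains its global maximum at $y=1$, with value $g(1)=\frac{|1-\beta|}{1+\beta}$. Consequently
\[
\sup_{x\in\R}|(\fpsk)'(x)| \;=\; k\cdot\frac{|1-\beta|}{1+\beta}.
\]

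Finally, I would check that $\frac{|1-\beta|}{1+\beta}<\frac{1}{d}$ is equivalent to $\beta\in(\frac{d-1}{d+1},\frac{d+1}{d-1}) = (\frac{\Delta-2}{\Delta},\frac{\Delta}{\Delta-2})$. Splitting into the cases $\beta<1$ and $\beta>1$, the inequality $d\,|1-\beta|<1+\beta$ rearranges in each case to one of the two endpoint conditions, so the statement is a strict equivalence. Since the inequality is strict for each $\beta$ in the open interval, one can pick $\epsilon>0$ (depending only on $\beta$ and $d$) so that $\frac{|1-\beta|}{1+\beta}\le\frac{1-\epsilon}{d}$, which yields $|(\fpsk)'(x)|\le \frac{k}{d}(1-\epsilon)$ uniformly in $x\in\R$.

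There is no serious obstacle: the only non-routine step is identifying that the maximum of $g$ occurs at $y=1$, which either the $y\mapsto 1/y$ symmetry or a one-line derivative makes transparent. Note also that the bound is independent of $s$, which is consistent with the statement since the $s\log\beta$ summand in $\fpsk$ is $x$-independent.
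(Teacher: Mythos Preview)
Your argument is correct and essentially identical to the paper's: both compute the derivative explicitly, locate its maximum at $e^{x}=1$, and observe that the resulting value $\frac{|1-\beta|}{1+\beta}$ is strictly below $1/d$ on the stated interval. The only cosmetic difference is that the paper writes the denominator as $\beta^2+1+\beta(e^x+e^{-x})$ and invokes AM--GM ($e^x+e^{-x}\ge 2$) in place of your symmetry/derivative step; these are the same observation.
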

\begin{proof}
  By direct calculation, one has
  \begin{displaymath}
    \bigl|{\fpsk}'(x)\bigr| = \frac{k \abs{1-\beta^2}}{\beta^2 +
      1 + \beta(e^x + e^{-x})}.
  \end{displaymath}
  By the AM-GM inequality, $e^x + e^{-x} \ge 2$ for every real $x$,
  and the left hand side is therefore at most
  $\frac{k}{d} \times d \times \frac{\abs{1-\beta}}{1+\beta}$ which is
  strictly smaller than $\frac{k}{d}$ under the condition on $\beta$.
\end{proof}

For any integers $k \geq 0$ and $s$ %
and a positive real $\beta$, we have
$\beta^{k + \abs{s}} \le f_{\beta,k,s}(x) \le \frac{1}{\beta^{k +
    \abs{s}}}$ when $\beta \leq 1$, and
    $ \frac{1}{\beta^{k + \abs{s}}}\le f_{\beta,k,s}(x) \le \beta^{k + \abs{s}}$ for $\beta\ge 1$.
Taking the logarithm of these bounds motivates the definition of the
intervals $I_0(\beta,k)$ as follows:
\begin{equation}
  I_0 = I_0(\beta, k) \defeq \inb{-2k\abs{\log\beta}, 2k\abs{\log\beta}}.
\label{eq:3}
\end{equation}

We now note some consequences of the contraction shown in
\MakeUppercase Proposition\nobreakspace \ref {lem:correlation-decay}
for the behaviour of the recurrence on the complex plane.  We write
$f^\varphi_{\beta, k}$ in place of $f^{\varphi}_{\beta, k, 0}$ to simplify
notation.
\begin{lemma}
  \label{lem:strict-contraction}
  Fix a degree $\Delta = d + 1$, and let $\beta \in B = \betainv$.
  Then there exist positive real constants
  $\delta_\beta, \epsilon, \eta$ and $M$ such that the following is
  true.  Let $C_0 = C_0(\beta, d)$ be the set consisting of all points
  within distance $\epsilon$ of $I_0(\beta, d)$ in $\C$, and let $B_0$
  be the set of all $\beta' \in \C$ such that
  $\abs{\beta' - \beta} < \delta_\beta$.  Then, for every $x \in C_0$,
  $\beta' \in B_0$ and a positive integer $k \leq d$, the function
  $\fpk$ and its derivative ${\fpk}' = \D{\fpk}{x}$ are both
  $M$-Lipschitz as functions of $\beta$ in an open neighborhood of
  $\beta$, and ${\fpk}'$ further satisfies
  \begin{equation}
    \sup_{x_0 \in C_0, \beta' \in B_0}
    \bigl|{\fpk}'(x_0)\bigr| < \frac{k}{d}(1- \eta).
    \label{eq:4}
  \end{equation}
  Moreover,
  \begin{enumerate}[(a)]
  \item $\fpk(x)$ preserves the real axis: for any
    $x\in\R,\beta\in\R$, we have $\fpk(x)\in \R$;\label{item:1}
  \item $\fpk(x)$ preserves the imaginary axis: for any purely
    imaginary number $x$ and $\beta\in \R$, $\fpk(x)$ is also purely
    imaginary;\label{item:2}
  \item for any $x'\in C_0$ and $\beta' \in B_0$,
    $\bigl|\Im\bigl( f_{\beta',k}^\varphi(x') \bigr)\bigr| \le
    \frac{k}{d}(1-\eta) \abs{\Im(x')} + M \abs{\beta' -
      \beta}$;\label{item:3}
  \item for any $x'\in C_0$ and $\beta' \in B_0$,
    $\bigl|\Re\bigl( f_{\beta',k}^\varphi(x') \bigr)\bigr| \le
    \frac{k}{d}(1-\eta) \abs{\Re(x')} + M \abs{\beta' -
      \beta}$.\label{item:4}
  \end{enumerate}
\end{lemma}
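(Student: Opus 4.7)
The plan is to extend the contraction estimate from Proposition~\ref{lem:correlation-decay} from the real line to a complex neighborhood via analyticity and compactness, then verify the axis-preservation properties (a) and (b) by direct calculation, and finally derive (c) and (d) via a one-dimensional integration along a straight path combined with a Lipschitz switch from $\beta'$ to $\beta$. The main technical hurdle is the first step; the rest is then essentially routine.

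The first step is to note that, for real $\beta \in B \subset (0,\infty)$ and real $x \in I_0(\beta, d)$, the quantity $h_\beta(e^x) = (\beta + e^x)/(\beta e^x + 1)$ is a positive real bounded away from $\{0, \infty\}$, so the principal branch of $\log$ makes $\fpk(x) = k\log h_\beta(e^x)$ and $(\fpk)'$ jointly analytic in $(\beta', x)$ on an open neighborhood of $\{\beta\} \times I_0$ in $\C^2$. By compactness of $I_0$, for sufficiently small $\delta_\beta, \epsilon > 0$ these functions are analytic on an open set containing $\overline{B_0} \times C_0$. The pointwise bound of Proposition~\ref{lem:correlation-decay}, combined with continuity and the compactness of $C_0$ (and the finite range $1 \le k \le d$), then upgrades to the uniform bound~(\ref{eq:4}) for a suitable $\eta > 0$. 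The Lipschitz constants $M$ for $\fpk$ and $(\fpk)'$ as functions of $\beta'$ follow from the boundedness of their $\beta'$-derivatives on the same compact domain.

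Properties (a) and (b) are elementary. For (a): when $\beta, x \in \R$, $e^x > 0$ and hence $h_\beta(e^x) > 0$, so $\fpk(x) \in \R$. For (b): when $\beta \in \R$ and $x = iy$ with $y \in \R$, $|e^x| = 1$ and
\begin{displaymath}
\abs{h_\beta(e^{iy})}^2 = \frac{\beta^2 + 2\beta\cos y + 1}{\beta^2 + 2\beta\cos y + 1} = 1,
\end{displaymath}
so $\fpk(iy)$ lies on the imaginary axis (the branch of $\log$ is fixed by continuity from $\fpk(0) = 0$).

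For (c), write
\begin{displaymath}
\fpb(x') = \bigl[\fpb(x') - \fpk(x')\bigr] + \bigl[\fpk(x') - \fpk(\Re(x'))\bigr] + \fpk(\Re(x')).
\end{displaymath}
By (a), the last summand is real, so it vanishes when we take $\Im(\cdot)$. The first bracket has modulus at most $M\abs{\beta' - \beta}$ by the Lipschitz estimate. For the second bracket, integrate $(\fpk)'$ along the vertical segment from $\Re(x')$ to $x'$; this segment lies in $C_0$ (since for any point on it, both its horizontal distance to $I_0$ and its vertical distance to $\R$ are bounded by those of $x'$), and (\ref{eq:4}) gives modulus at most $\tfrac{k}{d}(1-\eta)\abs{\Im(x')}$. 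Using $|\Im(z)| \le |z|$ and the triangle inequality yields (c). Part (d) is exactly analogous, with $\Re(x')$ replaced by $i\Im(x') \in C_0$ (using $0 \in I_0$), (b) used in place of (a), and the horizontal segment from $i\Im(x')$ to $x'$ used in place of the vertical one.
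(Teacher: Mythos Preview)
Your proof is correct and follows essentially the same approach as the paper: analyticity plus compactness to lift Proposition~\ref{lem:correlation-decay} to a complex neighborhood, the M\"obius/unit-circle observation for (b), and a line-integral plus $\beta$-Lipschitz decomposition for (c) and (d). The only cosmetic difference is that in (c) and (d) you switch $\beta' \to \beta$ first and then integrate in $x$ (with real $\beta$), whereas the paper integrates in $x$ first (with $\beta'$) and then switches $\beta' \to \beta$ at the real/imaginary endpoint; both orderings work since the derivative bound~(\ref{eq:4}) holds uniformly for all $\beta' \in B_0$, including $\beta$ itself.
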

\begin{proof}
  We know from Proposition\nobreakspace \ref {lem:correlation-decay}
  that there exists $\eta > 0$ (depending only upon $\beta$ and $d$)
  such that for every $x \in I_0(\beta, d)$,
  $\bigl|{\fpk}'(x)\bigr| \leq \frac{k}{d}(1 - 2\eta)$.  Note also that
  $\fpk$ is analytic in both $\beta$ and $x$ when $\beta$ and $x$ lie
  in suitably small open neighborhoods (in $\C$) of $\beta$ and $I_0$
  respectively.  The existence of suitable constants
  $\delta_\beta, \epsilon$ and $M$, and eq.\nobreakspace \textup
  {(\ref {eq:4})} then follows from the compactness of the interval
  $I_0$. 

  Now for part~(\ref{item:1}), when both $x\in\R,\beta\in\R$, it follows
  from the definition of $\fpk = \log \circ f_{\beta,k} \circ \exp$ that
  $\fpk(x) \in \R$.  For part~(\ref{item:2}), we consider a purely imaginary
  number $y\iota$ and $\beta \in \R$ and use again the preceding
  definition of $\fpk$.  %
  Note first that $\exp(yi)$ lies on the unit circle, and hence is
  mapped again to the unit circle by the M\"obius transform
  $\frac{\beta+x}{\beta x+1}$ when $\beta \in
  \R$. $f_{\beta,k}(\exp(y\iota))$ also therefore lies on the unit circle,
  and its logarithm is therefore a purely imaginary number as required.

  For part~(\ref{item:3}), consider the real number $x=\Re(x')$, and note
  that $\fpk(x)$ is also a real number. Then,
  \begin{align*}
    \bigl|\Im\bigl(f_{\beta',k}^\varphi(x') \bigr)\bigr|
    \le &\bigl| f_{\beta',k}^\varphi (x') - f_{\beta,k}^\varphi (x) \bigr| %
          \le \bigl|f_{\beta',k}^\varphi(x') - f_{\beta',k}^\varphi(x)\bigr|  +  \bigl| f_{\beta',k}^\varphi(x) - f_{\beta,k}^\varphi(x)\bigr| \\
    \le & \abs{\int_\gamma {f_{\beta',k}^\varphi}' (z) dz } + M\abs{  \beta' - \beta},  \hbox{where $\gamma$ is the line segment $l(x, x')$}\\
    \le &\abs{\int_\gamma \abs{{f_{\beta',k}^\varphi}' (z)} dz } + M \abs{  \beta' - \beta}\\
    \le &\abs{x'  - x}\sup_{z \in C_0}\bigl|{f_{\beta',k}^\varphi}' (z)\bigr| + M\abs{  \beta' - \beta} \\
          \le &\frac{k}{d}(1-\eta) \abs{\Im(x')} + M\abs{  \beta' - \beta}.
  \end{align*}

  For part~(\ref{item:4}), consider the purely imaginary number
  $x = \Im(x') \iota$, so that thus $\fpk(x)$ is also a purely
  imaginary number. Then, as in the case of part~(\ref{item:3}),
  \begin{align*}
    \bigl|\Re\bigl( f_{\beta',k}^\varphi(x') \bigr)\bigr|
    \le & \bigl| f_{\beta',k}^\varphi (x') - \fpk (x)\bigr| %
          \le \bigl| f_{\beta',k}^\varphi(x') - f_{\beta',k}^\varphi(x) \bigr|  +  \bigl| f_{\beta',k}^\varphi(x) - \fpk(x)\bigr| \\
    \le &\abs{x' - x} \sup_{z \in C_0}\bigl|{f_{\beta',k}^\varphi}'(z)\bigr| + M\abs{  \beta' - \beta}\\
    \le & \frac{k}{d}(1-\eta) \abs{\Re(x')} + M\abs{  \beta' - \beta}.\qedhere
  \end{align*}
\end{proof}

\subsection{Reduction to the univariate case} We now show that, as in
the analysis of several correlation decay algorithms (see, e.g. the
arguments in
Refs.~\citenum{sinclair_approximation_2012,li_correlation_2011}, and
more recently, Ref.~\citenum{peters17:_sokal}),
we can restrict our attention to the
univariate version of the recurrence by exploiting a suitable notion of convexity.
Recall that
\begin{equation}
  \label{eq:5}
  h_\beta(x) \defeq  \frac{\beta+ x }{\beta x + 1},
\end{equation}
so that the univariate recurrence
$f_{\beta, k, s}(x) = \beta^s h_\beta(x)^k = \beta^sf_{\beta, k}(x)$.

\begin{lemma}
  Fix an arbitrary complex $\beta$.  For any set $D$ such that
  $\log\left( h_\beta(D) \right)$ is convex in the complex plane, we
  have $F_{\beta,k, s}(D^k) = f_{\beta, k, s}(D)$ for all integers
  $k \geq 0$ and $s$.
  \label{lem:convexity-reduction}
\end{lemma}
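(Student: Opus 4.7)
\medskip
\noindent\textbf{Proof plan.}
The plan is to use the standard trick that turns the multivariate recurrence $F_{\beta,k,s}$ into the univariate $f_{\beta,k,s}$ by averaging in log-space, where convexity buys us a pre-image inside~$D$.

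First I would observe that the inclusion $f_{\beta,k,s}(D)\subseteq F_{\beta,k,s}(D^k)$ is trivial: plug in the ``diagonal'' vector $\vec{x}=(x,x,\ldots,x)$ with $x\in D$, and $F_{\beta,k,s}(\vec{x})=\beta^s\prod_{i=1}^k h_\beta(x_i)=\beta^s h_\beta(x)^k=f_{\beta,k,s}(x)$. (For $k=0$, both sides reduce to the singleton $\{\beta^s\}$ and the claim is immediate, so I would treat $k\ge1$ from now on.)

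For the reverse inclusion $F_{\beta,k,s}(D^k)\subseteq f_{\beta,k,s}(D)$, I would pick an arbitrary $\vec{x}=(x_1,\dots,x_k)\in D^k$ and study the value
\[
F_{\beta,k,s}(\vec{x})=\beta^s\prod_{i=1}^k h_\beta(x_i).
\]
The hypothesis gives us well-defined complex numbers $y_i\in\log(h_\beta(D))$ with $\exp(y_i)=h_\beta(x_i)$, chosen so that the set $\log(h_\beta(D))\subseteq\C$ is convex. Then the average
\[
\bar y \;:=\; \frac{1}{k}\sum_{i=1}^k y_i
\]
lies in $\log(h_\beta(D))$ by convexity, so there is some $x\in D$ with $\log h_\beta(x)=\bar y$, i.e.\ $h_\beta(x)=\exp(\bar y)$. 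Raising to the $k$-th power gives $h_\beta(x)^k=\exp(k\bar y)=\prod_i \exp(y_i)=\prod_i h_\beta(x_i)$, so multiplying by $\beta^s$ yields $f_{\beta,k,s}(x)=F_{\beta,k,s}(\vec{x})$, as desired.

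The only subtle point (and what I would expect the reader to pause on) is the branch of the logarithm: strictly speaking, $\log(h_\beta(D))$ is a subset of $\C$ only after a consistent choice of branch across the elements of $h_\beta(D)$, and this is precisely what the convexity hypothesis implicitly supplies. Once a branch is fixed so that $\log(h_\beta(D))$ is a convex subset of $\C$, the argument above goes through unchanged, since the identity $\exp(k\bar y)=\prod_i\exp(y_i)$ holds regardless of branch choice. No other step requires more than the definitions of $F_{\beta,k,s}$ and $f_{\beta,k,s}$ together with the averaging-in-a-convex-set observation.
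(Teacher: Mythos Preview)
Your proposal is correct and follows essentially the same approach as the paper: both argue the trivial inclusion via the diagonal, and for the reverse inclusion use convexity of $\log(h_\beta(D))$ to replace the sum $\sum_i \log h_\beta(x_i)$ by $k\log h_\beta(\tilde x)$ for some $\tilde x\in D$. Your additional remark about the choice of branch is a welcome clarification but does not change the argument.
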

\begin{proof}
  The inclusion $f_{\beta, k, s}(D) \subseteq F_{\beta, k, s}(D^k)$ is
  trivial.  For the converse, consider $\vec{x} \in D^k$.  We then have
  \begin{align*}
    \log F_{\beta,k, s}(\vec{x}) = s\log \beta + \sum_{i=1}^k \log h_\beta(x_i).
  \end{align*}
  Now, by the convexity of $\log \left( h_\beta(D) \right)$, there
  exists $\tilde{x} \in D$ such that
  $s\log\beta + \sum_{i=1}^k \log h_\beta(x_i) = s \log \beta + k \log
  h_\beta(\tilde{x})$, which in turn equals
  $ \log f_{\beta, k, s}(\tilde{x})$.  It follows that
  $F_{\beta, k, s}(D^k) \subseteq f_{\beta, k, s}(D)$.
\end{proof}

Our strategy now will be to start with an appropriate convex set $C_2$
in the complex plane, and for a real $\beta \in B$ and a complex
$\beta'$ close to $\beta$, {\it define\/} $D$ as the region
$h_{\beta'}^{-1}(\exp(C_2))$.  Here $C_2$ will have to be so chosen
that its preimage~$D$ is closed under application of the univariate
recurrence (i.e., $f_{\beta', k, s}(D) \subseteq D$
for all integers $k \geq 0$ and $s$ such that
$1 \leq k + \abs{s} \leq \Delta - 1$).

To establish this latter claim, we will consider instead the action of
the maps $f_{\beta',k, s}^\varphi$ on the set
$C_1 \defeq \log (D) = (\log {\circ} h_{\beta'}^{-1} {\circ}
\exp)(C_2)$, and show that $f_{\beta', k, s}^\varphi(C_1) \subseteq C_1$
(which is equivalent to $f_{\beta', k, s}(D) \subseteq D$).  We know
from \MakeUppercase Lemma\nobreakspace \ref {lem:strict-contraction}
that $f_{\beta', k, 0}^\varphi$ is a contraction in an appropriate
domain.  Note that, on sets $S$ that are convex and symmetric around
the origin (which we call ``nice''), this contraction property implies
that $f_{\beta', k, 0}^\varphi$ maps $S$ into itself.  (Extending this to
$f_{\beta', k, s}$ with non-zero $s$ requires a careful argument,
which we defer to the full proof in Section~\ref{sec:finalproof}.)

Unfortunately, we are unable to prove directly that $C_1$ is ``nice''
due to the complexity of the map
$p_{\beta'} \defeq \log \circ h_{\beta'}^{-1} \circ \exp$ taking $C_2$
to $C_1$.  Instead we will work with a (non-analytic) approximation
$q$ of $p_\beta$, and further use the fact that $p_\beta$ is an
approximation of $p_{\beta'}$.  We will show that $q(C_2)$
 is a ``nice'' (in fact, rectangular) region, and then show in the following
section how such approximations can be ``chained'' to establish that
$C_1$ is indeed mapped into itself by $f_{\beta',k, s}^\varphi$.

\subsection{Rectangular sets and contraction}
Given non-negative reals $a$ and $b$, we define
\[
  R(a,b)\defeq \set{x+y \iota : \abs{x} \le a \text{ and } \abs{y} \le
    b}.
\]

We will need to consider maps which contract all but very small
rectangles in a given set.  For later use, we define this notion not
just for functions $\C \rightarrow \C$ but also for maps
$2^{\C} \rightarrow 2^{\C}$ which map subsets of complex numbers to
subsets.
\begin{definition}[\textbf{$(\chi, \tau, \xi)$-contraction of
    rectangles}] A map $g: 2^{\C} \rightarrow 2^{\C}$ is said to
  \emph{$(\chi, \tau, \xi)$-contract rectangles} in a given subset $C$
  of the complex plane if for every rectangle $R({a}, {b})$ contained
  in $C$ and $z \in g(R({a}, {b}))$, we have
  \begin{align*}
    \abs{\Re (z)} &\le \chi \max\set{a, \tau}; \\
    \abs{\Im (z)} &\le \chi \max\set{b, \xi}.
  \end{align*}
\end{definition}

We will now establish two facts which underlie the use of
rectangular sets for our purposes.  The first of these is the simple
observation that maps which contract rectangles either map the set
into itself or contract it . (Again for later use, we prove this for the
more general case of maps from $2^{\C}$ to $2^{\C}$.)  The second is
the fact that the maps $f_{\beta, k}^\varphi$ we considered above do
contract rectangles.

\begin{lemma}
  Let $C = R(\alpha_1, \alpha_2)$ be a rectangular set.  Let
  $\chi <1, \tau \leq \alpha_1$ and $\xi \leq \alpha_2$ be positive
  constants.  Let $g: 2^{\C} \rightarrow 2^{\C}$ be a map which
  $(\chi, \tau, \xi)$-contracts rectangles in $C$. Then
  $g\left(C\right) \subseteq \chi C$.
  \label{lem:closed-ideal} 
\end{lemma}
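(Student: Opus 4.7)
The plan is to prove this by a direct unfolding of the definition of $(\chi,\tau,\xi)$-contraction, applied to the rectangle $C = R(\alpha_1,\alpha_2)$ itself. Since $C$ is trivially a rectangle contained in~$C$, the definition applies with the choice $a = \alpha_1$ and $b = \alpha_2$.

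Concretely, I would fix an arbitrary $z \in g(C)$ and read off the two defining inequalities for the $(\chi,\tau,\xi)$-contraction on the rectangle $R(\alpha_1,\alpha_2)$. These give $|\Re(z)| \le \chi \max\{\alpha_1, \tau\}$ and $|\Im(z)| \le \chi \max\{\alpha_2, \xi\}$. The hypotheses $\tau \le \alpha_1$ and $\xi \le \alpha_2$ then collapse the maxima, yielding $|\Re(z)| \le \chi \alpha_1$ and $|\Im(z)| \le \chi \alpha_2$. This means $z \in R(\chi\alpha_1, \chi\alpha_2)$, which is exactly $\chi \cdot R(\alpha_1,\alpha_2) = \chi C$ by the definition of the rectangle and the scaling of subsets of~$\C$ by the positive real number~$\chi$.

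There is no real obstacle here: the lemma is essentially a tautological check, and its content is just to record the fact that the hypotheses $\tau \le \alpha_1$, $\xi \le \alpha_2$ are precisely the right normalisation so that the $\max\{\cdot,\cdot\}$ terms in the definition of contraction are controlled by the sides of~$C$. The only thing to be careful about is to explicitly invoke that $\chi < 1$ is not actually needed for the set containment $g(C)\subseteq \chi C$ itself; it will matter later (presumably in the proof of Theorem~\ref{lem:existence-of-D}) when iterating the map, so I would not dwell on it in the proof of this lemma.
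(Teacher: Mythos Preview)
Your proof is correct and is essentially identical to the paper's own proof: both apply the definition of $(\chi,\tau,\xi)$-contraction directly to the rectangle $C = R(\alpha_1,\alpha_2)$ itself, then use $\tau \le \alpha_1$ and $\xi \le \alpha_2$ to simplify the maxima. Your remark that $\chi < 1$ is not actually used in this lemma is also accurate.
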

\begin{proof}
	Since $g$ is a map that  $(\chi, \tau, \xi)$-contracts rectangles
    in $C$, and $C$ itself is a rectangular set, we have, for any $z
    \in g(C)$,
	\begin{align*}
		\abs{\Re(z)} \le \chi \max\set{\alpha_1,\tau} \le \chi \alpha_1; \\
		\abs{\Im(z)} \le \chi \max\set{\alpha_2,\xi} \le \chi \alpha_2.
	\end{align*}
	Thus $g(C) \subseteq \chi C$.
\end{proof}

We now show that the $2^{\C} \rightarrow 2^{\C}$ map naturally induced
by $\fpk = f^\varphi_{\beta, k, 0}$ satisfies the hypothesis of the above lemma for rectangular
sets contained in $C_0$.

\begin{lemma}
  \label{lem:rect-contraction}
  Fix a degree $\Delta = d + 1$, $\beta \in \betainv$, and let
  $\eta, M$, $B_0 = B_0(\beta,d)$ and $C_0 = C_0(\beta,d)$ be as defined
  in \MakeUppercase Lemma\nobreakspace \ref{lem:strict-contraction}.
  Then for any positive constants $\tau$ and $\xi$, positive integer
  $k$ such that $1 \leq k \leq d$, and any $\beta' \in B_0$ such that
  $\abs{\beta' -\beta} \leq \eta\min\set{\xi,\tau}/(2Md)$, we have that
  $f_{\beta', k}^\varphi$ is a map that
  $(\frac{k}{d}(1 - \frac{\eta}{2}), \tau, \xi)$-contracts rectangles
  in $C_0$.
\end{lemma}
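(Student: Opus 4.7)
The plan is to derive the rectangle contraction property directly from parts~(\ref{item:3}) and~(\ref{item:4}) of Lemma~\ref{lem:strict-contraction}, by using the smallness assumption $\abs{\beta' - \beta} \le \eta\min\{\xi,\tau\}/(2Md)$ to absorb the additive error term $M\abs{\beta'-\beta}$ into a slight weakening of the contraction factor from $\frac{k}{d}(1-\eta)$ to $\frac{k}{d}(1-\eta/2)$.

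Concretely, I would fix an arbitrary rectangle $R(a,b) \subseteq C_0$ and any $x' \in R(a,b)$, so that $\abs{\Re(x')}\le a$ and $\abs{\Im(x')}\le b$. Then I would apply parts~(\ref{item:3}) and~(\ref{item:4}) of Lemma~\ref{lem:strict-contraction} to the point $x'$, yielding
\[
\bigl|\Re(f_{\beta',k}^\varphi(x'))\bigr| \le \tfrac{k}{d}(1-\eta)a + M\abs{\beta'-\beta}, \qquad \bigl|\Im(f_{\beta',k}^\varphi(x'))\bigr| \le \tfrac{k}{d}(1-\eta)b + M\abs{\beta'-\beta}.
\]
The hypothesis on $\abs{\beta'-\beta}$ then gives $M\abs{\beta'-\beta} \le \eta\tau/(2d)$ and $M\abs{\beta'-\beta}\le\eta\xi/(2d)$.

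To conclude, I would handle the real part in two cases (the imaginary part is symmetric). If $a \ge \tau$, then since $k\ge 1$ we have $\eta\tau/(2d) \le k\eta a/(2d)$, and combining with the displayed bound yields $\abs{\Re(f_{\beta',k}^\varphi(x'))} \le \frac{k}{d}(1-\eta/2)a = \frac{k}{d}(1-\eta/2)\max\{a,\tau\}$. If $a < \tau$, bounding $a$ by $\tau$ in the first term gives $\abs{\Re(f_{\beta',k}^\varphi(x'))} \le \frac{\tau}{d}\bigl(k(1-\eta) + \eta/2\bigr)$, and since $k\ge 1$ implies $\eta/2 \le k\eta/2$, this is at most $\frac{k}{d}(1-\eta/2)\tau = \frac{k}{d}(1-\eta/2)\max\{a,\tau\}$. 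The identical argument with $b,\xi$ in place of $a,\tau$ handles the imaginary part.

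There is no real obstacle here: the proof is essentially a careful bookkeeping exercise showing that the bound on $\abs{\beta'-\beta}$ was chosen precisely so that the additive perturbation cost can be paid for out of half of the contraction slack $\eta$. The mildly nontrivial point is the use of $k\ge 1$ in both cases to absorb the error term, which is exactly why the lemma excludes $k=0$.
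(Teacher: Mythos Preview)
Your proposal is correct and follows essentially the same route as the paper: apply parts~(\ref{item:3}) and~(\ref{item:4}) of Lemma~\ref{lem:strict-contraction}, use the hypothesis to bound $M\abs{\beta'-\beta}$ by $\eta\tau/(2d)$ and $\eta\xi/(2d)$, and then split into two cases (using $k\ge 1$) to absorb the additive error into the contraction factor. The only cosmetic difference is that the paper does the case split on $\abs{\Re(z)}$ versus $\tau$ (and $\abs{\Im(z)}$ versus $\xi$) for an arbitrary $z\in C_0$, whereas you first pass to the rectangle parameters $a,b$ and split on those; the arithmetic is identical.
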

\begin{proof}
  It suffices to show that, for any
  $z \in C_0$, we have
  \begin{align*}
    \bigl| \Re\bigl(\fpb(z)\bigr)\bigr| \le {\textstyle\frac{k}{d}}(1-\eta/2)\max\set{\abs{\Re\left( z \right)}, \tau};\\
    \bigl| \Im\bigl(\fpb(z)\bigr)\bigr| \le {\textstyle\frac{k}{d}}(1-\eta/2)\max\set{\abs{\Im\left( z \right)}, \xi}.
  \end{align*}
  Now, since $\beta' \in B_0$, we have from part~(\ref {item:3}) of
  Lemma\nobreakspace \ref {lem:strict-contraction} that
  \begin{align*}
    \bigl|\Im\bigl( \fpb(z) \bigr)\bigr|
    \le {\textstyle\frac{k}{d}}(1-\eta) \abs{\Im(z)} + M\abs{\beta' - \beta}
    \le {\textstyle\frac{k}{d}}(1-\eta) \abs{\Im(z)} + \eta \xi/(2d) .
  \end{align*}
  If $\abs{\Im(z)} \geq \xi$, then
  $\frac{k}{d}(1-\eta) \abs{\Im(z)} + \eta \xi/(2d) \le \frac{k}{d}(1-\eta) \abs{\Im(z)} +
  \eta/(2d) \abs{\Im(z)} \leq \frac{k}{d}(1-\eta/2) \abs{\Im(z)}$.  Otherwise if
  $\abs{\Im(z)} < \xi$, then
  $\frac{k}{d}(1-\eta) \abs{\Im(z)} + \eta \xi/(2d) \le \frac{k}{d}(1-\eta) \xi + \eta \xi/(2d) \leq
  \frac{k}{d}(1-\eta/2) \xi$.  Therefore we have
  \[
    \bigl|\Im\bigl(\fpb(z)\bigr)\bigr| \le
    {\textstyle\frac{k}{d}}(1-\eta/2)\max\set{\abs{\Im\left( z \right)}, \xi}.
  \]
  A symmetrical argument using part~(\ref{item:4}) of Lemma\nobreakspace \ref {lem:strict-contraction}
  gives the desired upper bound on $\bigl|\Re\bigl( \fpb(z) \bigr)\bigr|$.
\end{proof}

We can now describe the convex set $C_2$.  Given a degree
$\Delta = d +1$, $\beta > 0$, and $\delta > 0$, we define for
$0 \leq k \leq d$
\begin{equation}
  \begin{aligned}
    C_2(\beta,\delta,k) &\defeq \set{z: \abs{\Re(z)} \le \bigl|\log
        h_\beta(\beta^{2k})\bigr|, \abs{\Im(z)} \le i_{k, \delta}(\Re(z))},
    \text{ where }\\
    i_{k, \delta}(x) &\defeq k\delta( \cosh (\log \beta) - \cosh
    (x)) \frac{2\beta}{\abs{1-\beta^2}}.
  \end{aligned}
  \label{eq:7}
\end{equation}
Note that when $k \geq 1$, $i_{k, \delta}$ is an even function that is
continuous, decreasing and concave for positive $x$.  Further, when
$k \geq 1$, we also have
$i_{k, \delta}\inp{\abs{\log h_\beta(\beta^{2k})}} > 0$ for all
positive $\beta \neq 1$ (as can be verified by checking that
$\abs{\log h_\beta(\beta^{2k})} = \log (\beta^{2k + 1} + 1) - \log (\beta +
\beta^{2k})$, and that this in turn is always smaller than $\abs{\log \beta}$
when $\beta \neq 1$).
In particular, these facts imply that $C_2$ is convex.
\begin{observation}
  For $\beta, \delta > 0$, $\beta \neq 1$ and $k_1 \geq k_2 \geq 0$,
  $C_2(\beta, \delta, k_1) \supseteq C_2(\beta, \delta, k_2)$.\label{obv:ascend}
\end{observation}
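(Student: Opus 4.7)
The plan is to reduce the containment to checking that the two quantities that define the boundary of $C_2(\beta,\delta,k)$, namely the horizontal half-width $|\log h_\beta(\beta^{2k})|$ and the vertical half-width $i_{k,\delta}(x)$ at each fixed $x$, are both monotone non-decreasing in $k$ on the relevant range. Once that is established, any point $z$ with $|\Re(z)|\leq|\log h_\beta(\beta^{2k_2})|$ and $|\Im(z)|\leq i_{k_2,\delta}(\Re(z))$ will automatically satisfy the analogous inequalities at $k_1\geq k_2$.

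For the horizontal bound, I would compute $h_\beta'(u)=(1-\beta^2)/(\beta u+1)^2$ and split into the cases $\beta<1$ and $\beta>1$. In the first case $h_\beta$ is increasing in $u$ and $\beta^{2k}$ is decreasing in $k$, while in the second case both monotonicities flip; in either case $h_\beta(\beta^{2k})$ moves monotonically away from $h_\beta(1)=1$, so $k\mapsto|\log h_\beta(\beta^{2k})|$ is non-decreasing. The same short computation, together with the identity $|\log h_\beta(\beta^{2k})|=\log(\beta^{2k+1}+1)-\log(\beta+\beta^{2k})$ already recorded in the text, gives the strict bound $|\log h_\beta(\beta^{2k})|<|\log\beta|$ whenever $\beta\neq 1$. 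For the vertical bound, since
\[
  i_{k,\delta}(x)=k\,\delta\bigl(\cosh(\log\beta)-\cosh(x)\bigr)\frac{2\beta}{|1-\beta^2|},
\]
and $\cosh$ is even and increasing on $[0,\infty)$, the factor $\cosh(\log\beta)-\cosh(x)$ is non-negative precisely when $|x|\leq|\log\beta|$; on that range $i_{k,\delta}(x)$ is a non-negative multiple of $k$ and therefore is non-decreasing in $k$.

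To finish, I would take $z\in C_2(\beta,\delta,k_2)$ and chain the inequalities: the first monotonicity gives $|\Re(z)|\leq|\log h_\beta(\beta^{2k_2})|\leq|\log h_\beta(\beta^{2k_1})|$, and the strict bound above places $\Re(z)$ in the interval $(-|\log\beta|,|\log\beta|)$ where the second monotonicity applies, so $|\Im(z)|\leq i_{k_2,\delta}(\Re(z))\leq i_{k_1,\delta}(\Re(z))$. Hence $z\in C_2(\beta,\delta,k_1)$. There is no real obstacle here, as the observation is a routine monotonicity check; the only subtlety worth flagging is ensuring that $\Re(z)$ stays inside $(-|\log\beta|,|\log\beta|)$ so that the expression defining $i_{k,\delta}$ is actually non-negative and the factor $k$ yields monotonicity with the correct sign.
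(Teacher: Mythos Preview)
Your proposal is correct and follows essentially the same approach as the paper: both reduce the containment to the monotonicity in $k$ of the horizontal half-width $|\log h_\beta(\beta^{2k})|$, established by the same case split on $\beta\lessgtr 1$ (the paper phrases this via the monotonicity of $h_\beta$ together with the identity $h_\beta(1/x)=1/h_\beta(x)$, while you compute $h_\beta'$ directly). Your write-up is in fact more complete than the paper's, which leaves the vertical-bound monotonicity (the linearity of $i_{k,\delta}$ in $k$ and its non-negativity on $|x|\le|\log\beta|$) implicit.
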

\begin{proof}
  This follows from the fact that $\abs{\log h_\beta(\beta^{2k})}$ is
  monotonically increasing in $k$ when $\beta \neq 1$ and $k \geq 0$.  The
  latter fact is a consequence of the monotonicity of $h_\beta$ on the
  positive real line (it is monotonically increasing when $\beta < 1$ and
  monotonically decreasing when $\beta > 1$), and the fact that
  $h_\beta(1/x) = 1/h_\beta(x)$.
\end{proof}

\section{Approximations for functions and sets}
\label{sec:appr-funct-sets}
As hinted in the discussion in the previous section, in order to apply
Lemmas\nobreakspace \ref {lem:closed-ideal} and\nobreakspace \ref
{lem:rect-contraction} directly we would need
$C_1 \defeq p_{\beta'}(C_2)$ to be a rectangular set, where for a
complex $\beta'$ close to $\beta$,
\begin{equation}
  \label{eq:6}
  p_{\beta'} \defeq \log \circ h_{\beta'}^{-1} \circ \exp.
\end{equation}
Unfortunately, we are not able to work directly with the somewhat
complicated map $p_{\beta'}$ in order to prove that $C_1$ is rectangular.

Instead we take the different approach of ``approximating'' $C_1$ by a
region we can directly prove to be rectangular.  We will do this
approximation in two steps.  We first consider
$\tilde{C} = p_\beta(C_2)$ as an approximation of $C_1$ and then
$q_\beta(C_2)$, where $q_\beta$ is a function approximating $p_\beta$,
as an approximation to $\tilde{C}$.  We then show that $q_\beta(C_2)$ is
indeed a rectangular set, and further, that the two approximations
above are good enough to allow one to show that $C_1$ itself is
mapped into itself by $f_{\beta'}^\varphi$.  The machinery for justifying
the approximations will be developed later in this section; first, we
describe $q_\beta$ and show that $q_\beta(C_2)$ is indeed rectangular.

Unlike $p_\beta$, which is analytic in a suitable neighborhood close
to the positive real line, $q_\beta$ will not be an analytic function.
Given $a + b\iota$ with $a, b \in \R$, we define
\begin{equation}
  q_\beta(a+b\iota) \defeq p_\beta(a) + p_\beta'(a)b\iota
  = \log \circ (h_\beta^{-1}) \circ \exp(a) +
  \frac{(1-\beta^2)}{2\beta(\cosh (\log \beta )- \cosh a)}b\iota.\label{eq:9}
\end{equation}
We will show that $q_\beta(C_2(\beta, \delta, k))$ is in fact a
rectangular region.  In preparation for this, we need the following facts:

\begin{observation}
The maps $p_\beta$ and $q_\beta$ have the following properties:
  \begin{enumerate}[(a)]
  \item $p_\beta'(a) = p_\beta'(-a)$.  Further, for $\beta > 0$ and $a$
    such that $\abs{a} < \abs{\log \beta}$, $p_\beta'(a)$ has the same
    sign as $1 - \beta^2$.\label{item:5}

  \item For $\beta > 0$, $\beta \neq 1$, and any positive integer $k$,
    $p_\beta$ maps the interval
    \[[-|\log h_\beta\inp{\beta^{2k}}|, |\log
      h_\beta\inp{\beta^{2k}}|\,]\] bijectively to the interval
    $[-2k\abs{\log \beta}, 2k\abs{\log \beta}\,]$.\label{item:6}

  \item For $\beta \neq 1$ and $\delta > 0$, $q_\beta$ bijectively
    maps the interior of $C_2(\beta, \delta, k)$ into the interior
    of $q_\beta(C_2(\beta, \delta, k))$, and the boundary
    $\partial C_2(\beta, \delta, k)$ to the boundary of
    $q_\beta(C_2(\beta, \delta, k))$. In particular,
    $\partial q_\beta(C_2(\beta, \delta, k)) = q_\beta(\partial
    C_2(\beta, \delta, k))$. \label{item:7}
  \end{enumerate}\label{obv:prop-q}
\end{observation}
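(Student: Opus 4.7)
The plan is to derive each part from a direct computation using the explicit formula $h_\beta^{-1}(y) = \frac{\beta - y}{\beta y - 1}$ (obtained by inverting the Möbius transform $h_\beta$), together with the chain rule for $p_\beta = \log\circ h_\beta^{-1}\circ \exp$.

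For part~(\ref{item:5}), I would first verify the formula for $p_\beta'(a)$ that is already used in the definition~(\ref{eq:9}) of $q_\beta$. Writing $p_\beta(a) = \log(\beta - e^a) - \log(\beta e^a - 1)$, a short calculation gives
\[
  p_\beta'(a) \;=\; \frac{e^a(1-\beta^2)}{(\beta - e^a)(\beta e^a - 1)} \;=\; \frac{1-\beta^2}{(\beta^2+1) - 2\beta \cosh a} \;=\; \frac{1-\beta^2}{2\beta(\cosh(\log\beta) - \cosh a)},
\]
after using $(\beta-e^a)(\beta e^a - 1) = e^a\bigl((\beta^2+1) - \beta(e^a + e^{-a})\bigr)$ and $2\cosh(\log\beta) = \beta + \beta^{-1}$. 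Since $\cosh$ is even, this formula shows immediately that $p_\beta'(a) = p_\beta'(-a)$. For the sign claim, I would use that $\cosh$ is strictly increasing on $[0,\infty)$, so $|a| < |\log\beta|$ forces $\cosh a < \cosh(\log\beta)$; the denominator $2\beta(\cosh(\log\beta) - \cosh a)$ is therefore strictly positive, and $p_\beta'(a)$ inherits the sign of $1 - \beta^2$.

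For part~(\ref{item:6}), the key identity is $p_\beta(\log y) = \log h_\beta^{-1}(y)$ applied at $y = h_\beta(\beta^{2k})$, which gives $p_\beta(\log h_\beta(\beta^{2k})) = 2k\log\beta$. To handle the other endpoint, I would use the symmetry $h_\beta(1/x) = 1/h_\beta(x)$ (immediate from the Möbius form), which implies $h_\beta^{-1}(1/y) = 1/h_\beta^{-1}(y)$ and hence $p_\beta(-\log h_\beta(\beta^{2k})) = -2k\log\beta$. A case analysis on the sign of $\log h_\beta(\beta^{2k})$ (which depends on whether $\beta<1$ or $\beta>1$) then identifies these two images with $\pm 2k|\log\beta|$. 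Bijectivity follows from strict monotonicity: by the excerpt's observation that $|\log h_\beta(\beta^{2k})| < |\log\beta|$ for $\beta\neq 1$, the entire interval in the domain lies in $(-|\log\beta|, |\log\beta|)$, where part~(\ref{item:5}) guarantees $p_\beta'$ is nonzero and of constant sign.

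For part~(\ref{item:7}), I would view $q_\beta$ as a smooth map $\R^2\to\R^2$ sending $(a,b) \mapsto (p_\beta(a),\, p_\beta'(a) b)$. Its Jacobian determinant is
\[
  \det\begin{pmatrix} p_\beta'(a) & 0 \\ p_\beta''(a)\, b & p_\beta'(a) \end{pmatrix} = p_\beta'(a)^2,
\]
which is strictly positive on all of $C_2(\beta,\delta,k)$ (again because $|\Re(z)| \le |\log h_\beta(\beta^{2k})| < |\log\beta|$ there, so $p_\beta'(a)\neq 0$ by part~(\ref{item:5})). An explicit inverse is available: given $u + v\iota$ in the image, set $a = p_\beta^{-1}(u)$ using the bijection of part~(\ref{item:6}), and $b = v/p_\beta'(a)$. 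Thus $q_\beta$ is a homeomorphism of $C_2$ onto its image, and by invariance of domain (or directly from the Jacobian computation plus the explicit inverse) it maps interior to interior and boundary to boundary.

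The only real obstacle is the algebraic manipulation in part~(\ref{item:5}) that produces the clean $\cosh$ form of $p_\beta'$; once that is in hand, parts~(\ref{item:6}) and~(\ref{item:7}) follow routinely from monotonicity and the Jacobian argument respectively.
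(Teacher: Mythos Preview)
Your proposal is correct and, for parts~(\ref{item:5}) and~(\ref{item:6}), matches the paper's argument almost verbatim: the paper also records the formula $p_\beta'(a) = \frac{1-\beta^2}{2\beta(\cosh(\log\beta) - \cosh a)}$, reads off evenness and the sign of the denominator, and then deduces monotonicity on $(-|\log\beta|,|\log\beta|)$ together with a direct evaluation at the endpoints.

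The one place you diverge is part~(\ref{item:7}). The paper argues by explicitly describing $\partial C_2(\beta,\delta,k)$ as two vertical segments plus the two curves $x\pm i_{k,\delta}(x)\iota$ and then tracking their images under the explicit form of $q_\beta$. You instead compute the Jacobian $\det = p_\beta'(a)^2 > 0$ and exhibit the global inverse $(u,v)\mapsto (p_\beta^{-1}(u),\, v/p_\beta'(p_\beta^{-1}(u)))$, concluding that $q_\beta$ is a homeomorphism onto its image and hence preserves interiors and boundaries. Both are valid; your route is slightly cleaner in that it does not rely on the specific shape of $\partial C_2$, while the paper's route has the minor advantage of making visible exactly where each boundary piece lands (which feeds into the later identification of $q_\beta(C_2)$ as a rectangle). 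One small cosmetic point: when you write $p_\beta(a) = \log(\beta - e^a) - \log(\beta e^a - 1)$, for $\beta<1$ both factors are negative, so it is safer to differentiate $\log\frac{\beta - e^a}{\beta e^a - 1}$ as a single quotient; the computation you carry out is unaffected.
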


\begin{proof}
\begin{enumerate}[(a)]
\item We have
      $p_\beta'(a) = \frac{(1-\beta^2)}{2\beta(\cosh (\log \beta )-
        \cosh a)}$, so that $p_\beta'(a) = p_\beta'(-a)$. Note that
      the denominator is always positive for $a$ such that
      $\abs{a} < \abs{\log \beta}$, since $\cosh x > \cosh y$
      whenever $\abs{x} > \abs{y}$.

    \item The previous part implies that $p_\beta$ is either
      monotonically strictly increasing (when $\beta < 1$) or
      monotonically strictly decreasing (when $\beta > 1$) in the open
      interval $(-\abs{\log\beta}, \abs{\log\beta})$.  As argued in
      the paragraph below eq.~\eqref{eq:7}, this interval contains the
      interval
      $[-|\log h_\beta\inp{\beta^{2k}}|, |\log
      h_\beta\inp{\beta^{2k}}|\,]$.  Thus, the fact that the map is
      bijective follows.  The values of $p_\beta$ at the endpoints are
      obtained by direct evaluation.

 \item This follows from the above two parts and the form of
      $q_\beta$, since the boundary of $C_2$ is given by the two
      vertical line segments
      $\bigl\{\pm |\log h_\beta(\beta^{2d})| + y\iota \st \abs{y}
        \leq i_{k, \delta}(|\log h_\beta(\beta^{2k})|)\bigr\}$ along with the
      curves
      $\bigl\{x \pm i_{k, \delta}(x)\iota \st \abs{x} \leq |\log
          h_\beta(\beta^{2k})|\bigr\}$. 
    \end{enumerate}
  \end{proof}
The property in part~(\ref{item:7}) of Observation\nobreakspace \ref {obv:prop-q} in fact holds also
for $p_\beta$ and $p_{\beta'}$ for $\beta'$ close to $\beta$, so we
make a note of this for future use.
\begin{observation}
  Let $d$ be a positive integer and $\beta$ a positive real such that
  $\beta \neq 1$.  There exists a $\delta' >0$ such that, for any
  $\beta'$ with $\abs{\beta' - \beta} \leq \delta'$ and
  $0 < \delta \leq \delta'$, $p_{\beta'}$ is analytic on
  $C_2(\beta, \delta, d)$ and also has an analytic inverse.  Further,
  $\partial p_{\beta'}(C_2(\beta, \delta, k)) = p_{\beta'}(\partial
  C_2(\beta, \delta, k))$ for any non-negative integer $k \leq
  d$. \label{obv:analytic-p}
\end{observation}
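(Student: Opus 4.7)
The plan is to decompose the claim into three pieces: (i) analyticity of $p_{\beta'}$ on $C_2(\beta,\delta,d)$ as a composition of analytic functions, (ii) bijectivity of $p_{\beta'}$ onto its image (from which the inverse function theorem yields an analytic inverse), and (iii) the boundary equality as a consequence of the resulting homeomorphism. Throughout, I would fix a sufficiently small closed tube $C_2(\beta,\delta',d)$ and use the obvious inclusion $C_2(\beta,\delta,d)\subseteq C_2(\beta,\delta',d)$ for $\delta\le\delta'$ (visible from the definition~\eqref{eq:7} since $i_{k,\delta}$ is linear in $\delta$) to transfer conclusions to the full stated range of $\delta$.

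For (i), recall $p_{\beta'}=\log\circ h_{\beta'}^{-1}\circ\exp$, where $h_{\beta'}^{-1}(w)=(\beta'-w)/(\beta'w-1)$ is a M\"obius map with unique pole at $w=1/\beta'$. On the real interval $I\defeq[-|\log h_\beta(\beta^{2d})|,|\log h_\beta(\beta^{2d})|]$, the strict inequality $|a|<|\log\beta|$ noted below eq.~\eqref{eq:7} shows that $\exp(a)$ stays a uniform positive distance from $1/\beta$, while Observation~\ref{obv:prop-q}(\ref{item:6}) identifies $h_\beta^{-1}(\exp(a))$ as a positive real, on which a principal branch of $\log$ is analytic. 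All of these conditions are open in both the argument and the parameter, so by compactness of $I$ they persist on $C_2(\beta,\delta',d)$ for every complex $\beta'$ with $|\beta'-\beta|\le\delta'$, provided $\delta'$ is chosen small enough.

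For (ii), the derivative formula in Observation~\ref{obv:prop-q}(\ref{item:5}) gives $p_\beta'(a)=(1-\beta^2)/(2\beta(\cosh(\log\beta)-\cosh a))$, which is nonzero and uniformly bounded away from zero on $I$, and hence on $C_2(\beta,\delta',d)$ (shrinking $\delta'$ if needed) and for $\beta'$ near $\beta$, yielding local biholomorphy. The main obstacle is global injectivity. I would first establish it for $\beta'=\beta$: since $p_\beta$ is strictly monotonic on $I$ by Observation~\ref{obv:prop-q}(\ref{item:5})--(\ref{item:6}), it is injective there; a failure of injectivity on every tube $C_2(\beta,1/n,d)$ would, by compactness, extract distinct sequences $z_n\neq w_n$ in these shrinking tubes converging to points of $I$ with $p_\beta(z_n)=p_\beta(w_n)$, contradicting injectivity on $I$ if the limits differ and local biholomorphy if they coincide. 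Having fixed such a $\delta'$, the extension to complex $\beta'$ near $\beta$ is a standard Hurwitz-type argument: any failure would produce $\beta_k'\to\beta$ and $z_k\neq w_k$ in $C_2(\beta,\delta',d)$ with $p_{\beta_k'}(z_k)=p_{\beta_k'}(w_k)$, and passing to the limit returns the same dichotomy for $p_\beta$.

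For (iii), $p_{\beta'}$ is a biholomorphism of $C_2(\beta,\delta,d)$ onto its image, hence in particular an open map whose inverse is also open. Applied to the interior of $C_2(\beta,\delta,k)\subseteq C_2(\beta,\delta,d)$ (the inclusion being Observation~\ref{obv:ascend}, which uses $\beta\neq 1$), openness in both directions gives $p_{\beta'}(\interior{C_2(\beta,\delta,k)})=\interior{p_{\beta'}(C_2(\beta,\delta,k))}$; combined with the corresponding closure identity, which holds because $p_{\beta'}$ and its inverse are continuous on compact sets, taking closure minus interior on both sides yields the desired boundary equality.
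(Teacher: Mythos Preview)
Your argument is correct, and parts (i) and (iii) match the paper's proof closely (the paper also invokes the open mapping theorem applied to both $p_{\beta'}$ and its inverse to obtain the boundary equality).  The genuine difference is in (ii): you establish injectivity of $p_{\beta'}$ on the tube by a compactness/Hurwitz argument---first for $\beta'=\beta$ via monotonicity on the real segment and local biholomorphy, then perturbing to nearby $\beta'$---and then appeal to the inverse function theorem.  The paper instead observes that the inverse is given \emph{explicitly} as $p_{\beta'}^{-1}=\log\circ h_{\beta'}\circ\exp$, so that analyticity of the inverse follows immediately once one checks (as in your part (i)) that each factor is analytic on the relevant image domain.  Your route is more robust in that it would apply to any analytic map with nonvanishing derivative on a real interval, whereas the paper's shortcut exploits the special structure of $p_{\beta'}$ as a composition of globally invertible maps; the tradeoff is that the paper's argument is one line while yours requires the two-step limiting argument.
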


\begin{proof}
  Recall that
  $p_{\beta'} \defeq \log \circ h_{\beta'}^{-1} \circ \exp$. Note that
  for $\delta' > 0$ small enough and any $\beta'$ within distance
  $\delta'$ of $\beta$, the real part of any $z$ in
  $(h_{\beta'}^{-1} \circ \exp)(C_2(\beta, \delta', d))$ (where
  $0 < \delta \leq \delta'$) is strictly positive. Thus, the standard
  branch of the $\log$ function is analytic with an analytic inverse
  on the sets $(h_{\beta'}^{-1} \circ \exp)(C_2(\beta, \delta, d)))$.
  The analyticity of $p_{\beta'} $ and the existence of the analytic
  inverse $p_{\beta'}^{-1} = \log \circ h_{\beta'}^{-1} \circ \exp$
  thus follow for such $\beta'$ and $\delta$.

  The final claim is trivially true when $k = 0$ since in this case
  $C_2(\beta, \delta, k) = \inbr{0}$. We therefore assume $k \geq
  1$. Let $C' \defeq p_{\beta'}(C_2(\beta, \delta, k))$.  By the
  compactness and connectedness of $C_2(\beta, \delta, k)$ and the
  continuity of $p_{\beta'}$, $C'$ is also compact and connected. By
  the open mapping theorem applied to $p_{\beta'}$ and
  $\interior{C_2(\beta, \delta, k)}$, we see that
  $p_{\beta'}(\interior{C_2(\beta, \delta, k)}) \subseteq
  \interior{C'}$.  This implies that
  $\partial C' \subseteq p_{\beta'}(\partial C_2(\beta, \delta, k))$.
  On the other hand, an application of the open mapping theorem to
  $p_{\beta'}^{-1}$ and $\interior{C'}$ shows that
  $p_{\beta'}^{-1}(\interior{C'}) \subseteq \interior{C_2(\beta,
    \delta, k)}$, which implies that
  $p_{\beta'}(\partial C_2(\beta, \delta, k)) \cap \interior{C'} =
  \emptyset$.  It therefore follows that $\partial C'$ is in fact
  equal to $p_{\beta'}(\partial C_2(\beta, \delta, k))$.
\end{proof}

We can now show that $q_\beta(C_2(\beta, \delta, k))$ is indeed a
rectangular region.
\begin{lemma}
  \label{lem:image-q}  Let $k$ be any non-negative integer and $\delta$ and $\beta$ be positive
  reals such that $\beta \neq 1$.  Define $C_2(\beta, \delta, k)$ and
  $i_{k, \delta}(x)$ as in
  eq.\nobreakspace \textup {(\ref {eq:7})}.
  Then,
  \begin{equation*}
    q_\beta\left( C_2(\beta,\delta, k) \right) = \set{x: \abs{\Re(x)} \le
      2k\abs{\log \beta}, \abs{\Im(x)} \le k\delta} = k
    R(2\abs{\log \beta}, \delta).
    \label{eq:8}
  \end{equation*}
  Further, if positive integers $d$ and $\Delta = d + 1$ are such that
  $k \leq d$ and $\beta \in \betarange$, then there exists a positive
  $\delta_1 = \delta_1(\beta, \Delta)$ such that for all
  $\delta \leq \delta_1$, $q_\beta(C_2(\beta, \delta, k))$ is
  contained in the interior of the set $C_0(\beta, d)$ defined in
  \MakeUppercase Lemma\nobreakspace \ref {lem:strict-contraction}.
\end{lemma}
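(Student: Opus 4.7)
The plan is to exploit Observation \ref{obv:prop-q}(c), which asserts that $q_\beta$ carries $\partial C_2(\beta,\delta,k)$ onto $\partial q_\beta(C_2(\beta,\delta,k))$. This reduces the first claim to a purely boundary-level computation: once I verify that $q_\beta(\partial C_2(\beta,\delta,k))$ is exactly the boundary of the rectangle $kR(2\abs{\log\beta},\delta)$, the compact connected set $q_\beta(C_2(\beta,\delta,k))$---which contains $q_\beta(0)=0$, an interior point of that rectangle---must equal the closed rectangle itself.

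First I decompose $\partial C_2(\beta,\delta,k)$ into two vertical segments at $a=\pm\abs{\log h_\beta(\beta^{2k})}$ together with the two graphs $b=\pm i_{k,\delta}(a)$ over $\abs{a}\le\abs{\log h_\beta(\beta^{2k})}$. Since $q_\beta(a+b\iota)=p_\beta(a)+p_\beta'(a)\,b\iota$, the real part of the image depends only on $a$, and by Observation \ref{obv:prop-q}(b) it sweeps $[-2k\abs{\log\beta},2k\abs{\log\beta}]$ bijectively as $a$ ranges over $[-\abs{\log h_\beta(\beta^{2k})},\abs{\log h_\beta(\beta^{2k})}]$. The crucial computation is that on the upper/lower curves,
\[
p_\beta'(a)\cdot(\pm i_{k,\delta}(a))=\pm\frac{1-\beta^2}{2\beta(\cosh(\log\beta)-\cosh a)}\cdot k\delta(\cosh(\log\beta)-\cosh a)\cdot\frac{2\beta}{\abs{1-\beta^2}}=\pm k\delta\cdot\frac{1-\beta^2}{\abs{1-\beta^2}},
\]
which is constant in $a$---the form of $i_{k,\delta}$ in eq.~\eqref{eq:7} was engineered precisely to force this cancellation. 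Thus the top/bottom arcs map to horizontal segments at imaginary part $\pm k\delta$ spanning $[-2k\abs{\log\beta},2k\abs{\log\beta}]$. The same identity, now evaluated at the endpoint $b=\pm i_{k,\delta}(\abs{\log h_\beta(\beta^{2k})})$, shows that on each vertical segment of $\partial C_2$ the imaginary part $p_\beta'(a)b$ sweeps exactly $[-k\delta,k\delta]$ while the real part stays fixed at $\pm 2k\abs{\log\beta}$; these are the vertical sides of the rectangle.

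For the second claim, $kR(2\abs{\log\beta},\delta)\subseteq dR(2\abs{\log\beta},\delta)$ since $k\le d$, and its real-part extent $[-2d\abs{\log\beta},2d\abs{\log\beta}]$ is precisely $I_0(\beta,d)$. With $\epsilon$ the constant from Lemma \ref{lem:strict-contraction}, taking $\delta_1:=\epsilon/(2d)$ forces the imaginary-part extent $d\delta$ to be at most $\epsilon/2<\epsilon$, placing the rectangle strictly inside the $\epsilon$-neighborhood $C_0(\beta,d)$ of $I_0(\beta,d)$. The main obstacle is really just the cancellation identity on the top/bottom curves of $\partial C_2$---which is essentially built into the definition of $i_{k,\delta}$---while everything else is bookkeeping, with Observation \ref{obv:prop-q} handling the analytic structure.
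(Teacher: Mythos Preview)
Your proof is correct and rests on the same cancellation identity as the paper, namely $p_\beta'(a)\,i_{k,\delta}(a)=\pm k\delta$. The paper packages this more directly: rather than computing $q_\beta(\partial C_2)$ and then invoking Observation~\ref{obv:prop-q}(\ref{item:7}) together with a compactness/connectedness argument to fill in the interior, it simply observes that for each fixed real $a$ in the admissible interval the imaginary part $p_\beta'(a)\,b$ sweeps exactly $[-k\delta,k\delta]$ as $b$ ranges over $[-i_{k,\delta}(a),i_{k,\delta}(a)]$, so the image is the union of full-height vertical slices $\{p_\beta(a)\}\times[-k\delta,k\delta]$ and hence, by Observation~\ref{obv:prop-q}(\ref{item:6}), the rectangle itself. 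Your boundary-then-topology route is slightly more indirect but equally valid, and the second claim is handled identically in both arguments.
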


\begin{proof}
  The range of $\Re(x)$ follows from part~(\ref {item:6}) of
  Observation\nobreakspace \ref {obv:prop-q}.  To derive the range of
  $\Im(x)$, we first note that the maximum and minimum possible
  values of $b$ in eq.\nobreakspace \textup {(\ref {eq:9})} for a
  given value of $a > 0$ are given by $\pm i_{k, \delta}(a)$.  From part~(\ref {item:5})
  of Observation\nobreakspace \ref {obv:prop-q}, we then
  have that the set of all possible $\Im(q_\beta(a + \iota b))$ for all
  such $a$ and $b$ is precisely the interval
  $[-|{p_\beta'\inp{\abs{a}}}| \times i_{k, \delta}\inp{\abs{a}},
  |{p_\beta'\inp{\abs{a}}}| \times i_{k, \delta}\inp{\abs{a}}]$, which equals
  $[-k\delta, k\delta]$ using the form of $p_\beta'$ and
  $i_{k, \delta}$.

  The fact that $q_\beta(C_2(\beta, \delta,k))$ is contained in the
  interior of $C_0 = C_0(\beta, d)$ for small enough $\delta$
  follows from the facts that $i_{k, \delta}$ is proportional to
  $\delta$, that the real part of every point $z$ in
  $q_\beta(C_2(\beta, \delta,k))$ lies in the interval
  $I_0 = I_0(\beta, d)$ as defined in Lemma\nobreakspace \ref
  {lem:strict-contraction}, and that $C_0$ is an open neighborhood of
  $I_0$ in the complex plane. \qedhere
\end{proof}

We now show that $q_\beta$ is a meaningful approximation to $p_\beta$.
We begin with a simple observation.

\begin{lemma}
  \label{lem:image-p} Fix a positive $\beta \neq 1$ and a positive
  integer $d$.  There exist positive constants $M_1$ and $\delta_1$
  such that for all $\delta \leq \delta_1$, $0 \leq k \leq d$, and
  $C_2(\beta, \delta, k)$ as defined in eq.\nobreakspace \textup
  {(\ref {eq:7})},
  \[  \forall z \in C_2(\beta,\delta, k),\quad
  \abs{p_\beta(z)-q_\beta(z)} < M_1\delta^2/2. \]
\end{lemma}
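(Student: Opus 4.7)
The plan is to recognize $q_\beta$ as (essentially) the first-order Taylor expansion of $p_\beta$ about the real point $a = \Re(z)$ in the imaginary direction, and then bound the remainder using analyticity of $p_\beta$ on a suitable compact neighborhood. Indeed, $q_\beta(a + b\iota) = p_\beta(a) + p_\beta'(a)\cdot b\iota$, which is precisely the linear Taylor polynomial of the complex-analytic function $w \mapsto p_\beta(a + w)$ evaluated at $w = b\iota$.

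Concretely, I would first invoke Observation~\ref{obv:analytic-p} (with $\beta' = \beta$) to obtain a $\delta_1 > 0$ such that $p_\beta$ is analytic on the compact set $C_2(\beta, \delta_1, d)$; set
\[
  M' \defeq \sup_{w \in C_2(\beta, \delta_1, d)} \bigl|p_\beta''(w)\bigr| < \infty.
\]
For any $\delta \leq \delta_1$ and $0 \leq k \leq d$, Observation~\ref{obv:ascend} gives $C_2(\beta, \delta, k) \subseteq C_2(\beta, \delta_1, d)$; moreover, since $i_{k,\delta}$ is concave, even and non-negative on $[-|\log h_\beta(\beta^{2k})|, |\log h_\beta(\beta^{2k})|]$, the vertical segment $\gamma_z$ from $a$ to $z = a + b\iota$ lies entirely in $C_2(\beta, \delta, k) \subseteq C_2(\beta, \delta_1, d)$. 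Applying the Taylor remainder formula along $\gamma_z$ gives
\[
  \bigl|p_\beta(z) - q_\beta(z)\bigr| = \left| \int_{\gamma_z} p_\beta''(w)\,(z - w)\,dw \right| \leq \tfrac{1}{2} M'\, b^2.
\]

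Finally, I would bound $b$ using the very definition of $C_2$. Since $\cosh$ is increasing on $[0, \infty)$, the maximum of $i_{k,\delta}(x)$ on the allowed range of $x$ is attained at $x = 0$, yielding
\[
  |b| \leq i_{k,\delta}(|a|) \leq i_{k, \delta}(0) = k\delta\,(\cosh(\log\beta) - 1)\cdot \tfrac{2\beta}{|1-\beta^2|} \leq d\, C_\beta\, \delta,
\]
where $C_\beta$ depends only on $\beta$. Combining these bounds, $|p_\beta(z) - q_\beta(z)| \leq \tfrac{1}{2}\, M' d^2 C_\beta^2\, \delta^2$, so the conclusion follows by setting $M_1 \defeq M' d^2 C_\beta^2$.

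The argument is routine; the only thing to watch is that the Taylor remainder bound requires the segment from $a$ to $z$ to lie in a region where $p_\beta$ is analytic, which is handled cleanly by shrinking $\delta_1$ and appealing to Observation~\ref{obv:analytic-p}. Note also that the $k^2 \leq d^2$ slack is harmless here since $M_1$ is allowed to depend on $d$ and $\beta$, but this factor will be important in the subsequent sections where the $k$-dependence is tracked more carefully.
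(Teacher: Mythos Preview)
Your proof is correct and follows essentially the same approach as the paper: both use the analyticity of $p_\beta$ on $C_2(\beta,\delta_1,d)$ (via Observation~\ref{obv:analytic-p}) to obtain a uniform bound $M'$ on $|p_\beta''|$, apply the Taylor/mean-value remainder along the vertical segment to get $|p_\beta(z)-q_\beta(z)|\le \tfrac12 M' b^2$, and then bound $|b|\le i_{k,\delta}(0)=k\delta\,\tfrac{|\beta-1|}{\beta+1}\le d\delta\,\tfrac{|\beta-1|}{\beta+1}$ to conclude. The only cosmetic difference is that you spell out the integral remainder and the containment of the vertical segment explicitly, whereas the paper simply invokes a ``mean value argument.''
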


\begin{proof}
  We have already argued in Observation\nobreakspace \ref
  {obv:analytic-p} that there exists $\delta' > 0$ (depending upon
  $\beta$ and $d$) such that $p_\beta$ is analytic in
  $C_2(\beta, \delta, d)$ for $\delta \leq \delta'$.  We therefore
  have a finite maximum, say $M_0$, for $|{p_\beta''(x)}|$ when $x$
  ranges over all $C_2(\beta, \delta, d)$ with $\delta \leq \delta'$.
  From a mean value argument applied to $p_\beta$, we then have, for
  any
  $a + b\iota \in C_2(\beta, \delta,k) \subseteq C_2(\beta, \delta,
  d)$,
  \[
    \abs{p_\beta(a+b \iota) - q_\beta(a+b \iota)} = \abs{p_\beta(a+b\iota) -
      p_\beta(a) - p_\beta'(a)b\iota} \le M_0b^2/2.
  \]
  Note that $\abs{b} \leq i_{k, \delta}(0) = \delta k
  \frac{\abs{\beta-1}}{\beta+1} \le \delta d \frac{\abs{\beta-1}}{\beta+1}$ for
  $a + b \iota  \in C_2(\beta, \delta,k)$, so that we can choose
  $M_1 > M_0 d^2 \inp{\frac{\abs{\beta-1}}{\beta+1}}^2 $ to get the
  claim. (Note that $M_1 > M_0$ suffices if $\beta$ is in the
  correlation decay interval $\betarange$ corresponding to
  $d = \Delta - 1$.)
\end{proof}

In a similar fashion, we can show that $p_{\beta}$ approximates $p_{\beta'}$ for
complex $\beta'$ close to $\beta$:
\begin{lemma}
  \label{lem:image-p2} Fix a positive $\beta \neq 1$ and a positive
  integer $d$.  There exist positive constants $\delta_1$ and $M_1$
  such that for all positive $\delta \leq \delta_1$, there exists a
  positive $\delta_2 > 0$ such that for all $\beta'$ with
  $\abs{\beta'-\beta} \leq \delta_2$ and $0 \leq k \leq d$, we have
  \[
    \forall z \in C_2(\beta,\delta, k), \quad |{p_\beta(z) -
      p_{\beta'}(z)}| < M_1\delta^2/2.
  \]
  Here, $C_2(\beta, \delta, k)$ is as defined in eq.\nobreakspace
  \textup {(\ref {eq:7})}.
\end{lemma}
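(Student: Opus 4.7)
The plan is to reduce this to a uniform Lipschitz estimate of $p_{\beta'}$ in the parameter~$\beta'$, and then absorb the $\delta^2$ factor by shrinking~$\delta_2$ appropriately.

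First, I would choose $\delta_1$ exactly as in Observation~\ref{obv:analytic-p} and Lemma~\ref{lem:image-p}, so that for every $\delta \leq \delta_1$ the set $C_2(\beta, \delta, d)$ is contained in a fixed compact set $K$ on which $h_{\beta'}^{-1}\circ\exp$ takes values with strictly positive real part, uniformly for all $\beta'$ in some closed disk $\overline{B}(\beta, r)$ with $r > 0$ small (and not containing $\pm 1$ or $0$). On this compact product set $K \times \overline{B}(\beta, r)$, the map $(z, \beta') \mapsto p_{\beta'}(z) = \log \circ\, h_{\beta'}^{-1}\circ \exp(z)$ is jointly analytic, since $h_{\beta'}^{-1}(w) = (w - \beta')/(1 - \beta' w)$ is a rational function of $(w, \beta')$ avoiding its poles on this set, and the principal branch of $\log$ is analytic on its image by construction.

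Second, joint analyticity plus compactness gives a finite constant $L > 0$ such that
\[
  \sup_{z \in K,\ \beta' \in \overline{B}(\beta, r)}
  \left| \frac{\partial p_{\beta'}(z)}{\partial \beta'} \right| \leq L.
\]
Integrating along the straight-line segment from $\beta$ to~$\beta'$ in parameter space (which stays in $\overline{B}(\beta, r)$ for $|\beta' - \beta| \leq r$) yields the uniform Lipschitz bound
\[
  |p_\beta(z) - p_{\beta'}(z)| \leq L\,|\beta' - \beta|
  \quad\text{for all } z \in K,\ \beta' \in \overline{B}(\beta, r).
\]

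Third, given $\delta \leq \delta_1$, I would simply set $\delta_2 := \min\!\bigl\{r,\ M_1 \delta^2/(2L)\bigr\}$, with $M_1$ as in Lemma~\ref{lem:image-p} (enlarging it if necessary so that $M_1 > 0$ and the two lemmas can be stated with a common constant). Then for any $\beta'$ with $|\beta' - \beta| \leq \delta_2$ and any $z \in C_2(\beta, \delta, k) \subseteq C_2(\beta, \delta, d) \subseteq K$, we obtain
\[
  |p_\beta(z) - p_{\beta'}(z)| \leq L\,|\beta' - \beta| \leq L \cdot \frac{M_1 \delta^2}{2L} = \frac{M_1 \delta^2}{2},
\]
which is precisely the claim. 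The only mildly delicate point is verifying that the compact set $K$ can be chosen independently of $\delta \leq \delta_1$ and uniformly for $\beta' \in \overline{B}(\beta, r)$, so that the Lipschitz constant~$L$ does not blow up; this is straightforward from the explicit form of $h_{\beta'}^{-1}$ and the fact that $\beta \neq 1$ keeps us away from the singularities of the M\"obius inverse and of~$\log$. Everything else is routine estimation; the main obstacle, if any, is just bookkeeping to ensure $M_1$ can be chosen as the same constant in both Lemmas~\ref{lem:image-p} and~\ref{lem:image-p2}.
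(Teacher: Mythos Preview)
Your proposal is correct and follows essentially the same approach as the paper: both arguments establish joint analyticity of $(z,\beta')\mapsto p_{\beta'}(z)$ on a compact product set, extract a uniform bound $L$ (the paper's $M_0$) on $\partial p_{\beta'}/\partial\beta'$, integrate to get a Lipschitz estimate in $\beta'$, and then absorb the $\delta^2$ by choosing $\delta_2$ on the order of $\delta^2$. The only cosmetic difference is that your final chain of inequalities gives $\leq M_1\delta^2/2$ rather than the strict inequality in the statement; the paper handles this by taking $M_1$ strictly larger than the Lipschitz constant, which you can do as well.
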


\begin{proof}
  Arguments similar to those used in
  Observation\nobreakspace \ref {obv:analytic-p} show that for small
  enough positive $\delta'$ and~$\delta_1$, $p_{\beta'}(x)$ is analytic in
  both $\beta'$ and $z$ when $\abs{\beta' - \beta} \leq \delta'$ and
  $x \in C_2(\beta, \delta, d)$ for some $\delta \leq \delta_1$.  Hence we
  have a finite upper bound $M_0$ on
  $\frac{\partial}{\partial \beta'} p_{\beta'}(z)$ for such $\beta'$
  and $z$.  By a mean value argument, we then have for any such
  $\beta'$ and any $z \in C_2(\beta, \delta,k) \subseteq C_2(\beta,
  \delta, k)$,
  \[
    \abs{p_{\beta'}(z) - p_\beta(z)} \le M_0 \abs{\beta' - \beta}.
  \]
  We can then choose $M_1 > M_0$ and
  $\delta_2 \leq \min\inp{\delta', \delta^2/2}$ to get the
  claim.
\end{proof}

In order to study how these approximations act on subsets of the
complex plane, we use the
following notion of set approximation.

\begin{definition}[\textbf{Set approximation}]
  \label{def:set-approx} Let $\epsilon > 0$.
  Given a set $A$, we define
  \[
    \S(A) \defeq \bigcap_{\zeta\in\C: \abs{\zeta}\le \eps} \set{z +
      \zeta : z \in A},
  \]
  and
  \[
    \Sinv(A) = \bigcup_{\zeta\in\C: \abs{\zeta}\le \eps} \set{z + \zeta : z \in A}.
  \]
\end{definition}
Note that for any $\epsilon \geq 0$,
$\S(A) \subseteq A \subseteq \Sinv(A)$.  It is also easy to see that
for any collection $\mathcal{S}$ of sets,
$\S\inp{\bigcap_{A \in \mathcal{S}} A} = \bigcap_{A \in \mathcal{S}}
\S(A)$, and
$\Sinv\inp{\bigcup_{A \in \mathcal{S}} A} = \bigcup_{A \in
  \mathcal{S}} \Sinv(A)$. Further, $\Sinv$ is a ``pointwise
pseudoinverse'' of $\S$ in the sense that
$z \in \S(A) \iff \Sinv(\set{z}) \subseteq A$.  To see this, observe
that $z \in \S(A)$ iff  for all $\zeta$ such that
$\abs{\zeta}\le \eps$, $z - \zeta \in A$.  The latter is true iff
$\Sinv(\set{z}) \subseteq A$.  The latter fact implies that
the statements $\Sinv(A) \subseteq B$ and $A \subseteq \S(B)$ are
equivalent.  (Note, however, that the statements $A \subseteq \Sinv(B)$
and $\S(A) \subseteq B$ are \emph{not} in general equivalent.)  For
non-negative $\epsilon$ and $\delta$, we also have
\begin{align*}
  (S_\epsilon \circ S_\delta)(A) = S_{\epsilon + \delta}(A)\text{,
  and }
  (S_\epsilon^\dagger \circ S_\delta^\dagger)(A) = S_{\epsilon + \delta}^\dagger(A).
\end{align*}

This notion of set approximation can be related to 
function approximation via the following lemma.

\begin{lemma}
  Let $f$, $g$ be continuous maps on a compact subset $C$ of the
  complex plane such that
  $f(\partial C) = \partial f(C), g(\partial C) = \partial g(C)$.  If
  $g$ and $f$ are close in the sense that
  \[
    \forall z \in C, \abs{f(z) - g(z)} < \eps/2,
  \]
  then $\S(f(C)) \subseteq g(C) \subseteq \Sinv(f(C))$.
\label{lem:approx-image2}
\end{lemma}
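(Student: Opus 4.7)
The plan is to prove the two inclusions separately. The right-hand inclusion $g(C)\subseteq\Sinv(f(C))$ is essentially immediate: every $g(z)$ can be written as $f(z)+\zeta$ with $\zeta=g(z)-f(z)$ of magnitude $<\eps/2\le\eps$, and since $f(z)\in f(C)$, this puts $g(z)\in\Sinv(f(C))$ by definition. The harder inclusion $\S(f(C))\subseteq g(C)$ requires a topological argument via a linear homotopy between $f$ and $g$.

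Fix $w\in\S(f(C))$, so the closed $\eps$-disk around $w$ is contained in $f(C)$. The first step is to show that $\mathrm{dist}(w,f(\partial C))\ge\eps$: using $f(\partial C)=\partial f(C)$, any point $p\in f(\partial C)$ with $\abs{p-w}<\eps$ would be the centre of an open ball of radius $\eps-\abs{p-w}$ contained in the closed $\eps$-ball about $w$, hence in $f(C)$, contradicting $p\in\partial f(C)$. Combining this with the hypothesis $\abs{f-g}<\eps/2$ and the triangle inequality, the linear homotopy $H_t\defeq(1-t)f+tg$ satisfies $\abs{H_t(z)-w}>\eps/2>0$ for every $z\in\partial C$ and every $t\in[0,1]$.

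Now argue by contradiction: suppose $w\notin g(C)$. Then $g-w$ is a continuous map $C\to\C\setminus\{0\}$, so $(g-w)|_{\partial C}$ is null-homotopic as a map $\partial C\to\C\setminus\{0\}$ (it extends over all of $C$). The preceding estimate shows that $H_t-w$ is a homotopy in $\C\setminus\{0\}$ on $\partial C$ between $(f-w)|_{\partial C}$ and $(g-w)|_{\partial C}$, so $(f-w)|_{\partial C}$ is also null-homotopic there. On the other hand, $\bar B(w,\eps)\subseteq f(C)$ places $w$ in $\mathrm{int}(f(C))$, and together with $f(\partial C)=\partial f(C)$ this forces the winding number of $(f-w)|_{\partial C}$ about $0$ to be non-zero, which is the desired contradiction.

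The main obstacle is this last assertion---that the Brouwer degree of $f|_{\partial C}$ about any $w\in\mathrm{int}(f(C))$ is non-zero. The key input is the boundary condition $f(\partial C)=\partial f(C)$, which guarantees that the image of $\partial C$ is exactly the topological boundary of $f(C)$ in $\C$. For the simply-connected sets that actually arise in the paper---$C_2(\beta,\delta,k)$ and its images under the analytic maps $p_\beta$ and $q_\beta$ from Section~\ref{sec:appr-funct-sets}, each bounded by a Jordan curve---the winding-number computation is standard: $\partial f(C)$ is a Jordan curve enclosing $w$, so $f|_{\partial C}$ traverses it with degree $\pm 1$ about $w$, completing the argument.
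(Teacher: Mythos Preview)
Your argument for the inclusion $g(C)\subseteq\Sinv(f(C))$ is fine and matches the paper.  The hard inclusion, however, has a genuine gap.  The step ``$w\in\mathrm{int}(f(C))$ together with $f(\partial C)=\partial f(C)$ forces the winding number of $(f-w)|_{\partial C}$ to be non-zero'' is simply false, even when both $\partial C$ and $\partial f(C)$ are Jordan curves.  For a counterexample, take $C$ to be the closed unit disk and $f(re^{i\theta})\defeq r\,e^{i\psi(\theta)}$ with $\psi(\theta)=2\pi\sin(\theta/2)$.  Then $f$ is continuous on $C$, $f(C)=C$, $f(\partial C)=\partial C=\partial f(C)$, and $0\in\mathrm{int}(f(C))$; yet the winding number of $f|_{\partial C}$ about $0$ equals $(\psi(2\pi)-\psi(0))/2\pi=0$.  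Your retreat to the specific sets in the paper does not repair this: the assertion ``$\partial f(C)$ is a Jordan curve enclosing $w$, so $f|_{\partial C}$ traverses it with degree $\pm 1$'' is exactly what fails in the counterexample.  What would salvage the special case is that $p_\beta$, $p_{\beta'}$, $q_\beta$ are \emph{injective} on $C_2$ (Observations~\ref{obv:prop-q} and~\ref{obv:analytic-p}), so $f|_{\partial C}$ is a homeomorphism onto $\partial f(C)$ and thus has degree $\pm 1$; but you have not said this, and in any case it does not prove the lemma as stated.

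A tell-tale sign that something is missing is that your argument never uses the hypothesis $g(\partial C)=\partial g(C)$.  The paper's proof is entirely elementary and uses both boundary hypotheses symmetrically: assuming $w=f(z)\in\S(f(C))$ but $w\notin g(C)$, one first finds $y'\in\partial g(C)$ with $|w-y'|<\eps/2$ (since $g(z)\in g(C)$, $w\notin g(C)$, and $|w-g(z)|<\eps/2$), while $\bar B(w,\eps)\subseteq f(C)$ gives $|y''-w|\geq\eps$ for every $y''\in\partial f(C)$, hence $|y''-y'|>\eps/2$.  Now the second boundary hypothesis enters: $y'\in\partial g(C)=g(\partial C)$ yields $z'\in\partial C$ with $g(z')=y'$, whence $f(z')\in f(\partial C)=\partial f(C)$ satisfies $|f(z')-y'|<\eps/2$, a contradiction.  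No degree theory, no homotopy---just the triangle inequality.
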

\begin{proof}

  We first show that $g(C) \subseteq \Sinv(f(C))$.  Consider a point
  $g(z)$ for some $z \in C$, so $g(z) \in g(C)$.  Since $f$ and $g$
  are close, we see that $\zeta \defeq {f(z) - g(z)}$ has length less
  than $\eps/2$.  Therefore
  $g(z) \in \Sinv(\set{f(z)}) \subseteq \Sinv(f(C))$.

  Next, we show that $\S(f(C)) \subseteq g(C)$.  Note first that
  $\S(f(C)) \subseteq f(C)$, so that every point in the former is of
  the form $f(z)$ for some $z \in C$.  Now suppose, for the sake of contradiction,
  that for some $z \in C$ we have $f(z) \in \S(f(C))$ but
  $f(z) \not\in g(C)$.  Since $f$ and $g$ are close, we have
  $\abs{f(z) - g(z)} < \eps/2$.  Thus, if $f(z)$ is not in $g(C)$,
  there exists $y' \in \partial g(C)$ such that
  $\abs{f(z) - y'} < \eps/2$.  On the other hand, $f(z) \in \S(f(C))$
  means that $\Sinv(\set{f(z)}) \subseteq f(C)$, so that for all
  $y'' \in \partial f(C)$, $\abs{y'' - f(z)} \geq \eps$. We then have,
  for every $y'' \in \partial f(C)$,
  \begin{equation}
    \label{eq:10}
	\abs{y'' -y'}
	= \abs{y'' - f(z) + f(z) - y'}
	\ge \abs{ \abs{y'' - f(z)} - \abs{f(z) - y'} }
	> \eps/2.
  \end{equation}
We have thus shown that there exists $y' \in \partial g(C)$ such that
for all $y'' \in \partial f(C)$, $\abs{y'' - y'} > \eps/2$.  However,
since $y' \in \partial g(C)$ and $\partial g(C)$ is assumed to be
$g(\partial C)$, there must exist $z' \in \partial C$ such that
$g(z') = y'$.  Since $f$ and $g$ are close, we then have that
$\abs{f(z') - y'} < \epsilon/2$.  However, since we assumed that
$f(\partial C) = \partial f(C)$, we must also have
$f(z') \in \partial f(C)$ (since $z' \in \partial C)$.  But this is a
contradiction to eq.\nobreakspace \textup {(\ref {eq:10})}.
\end{proof}

The above lemma, along with our previous observations about the
properties of the maps $q_\beta$, $p_{\beta'}$ and $p_\beta$ now
allows us to lift our function approximations from 
Lemmas\nobreakspace \ref{lem:image-p} and\nobreakspace \ref {lem:image-p2}
to set approximations.

\newcommand{\St}{\ensuremath{S_{2\epsilon}}}
\newcommand{\Stinv}{\ensuremath{S_{2\epsilon}^\dagger}}
\begin{corollary}
  \label{cor:set-contract}
  Fix a positive $\beta \neq 1$ and a positive integer $d$, and let
  $C_2(\beta, \delta, k)$ for an integer $k$ such that
  $0 \leq k \leq d$ be as defined in eq.\nobreakspace \textup {(\ref
    {eq:7})}.  There exist positive constants $\delta_1$ and $M_1$
  (depending only upon $\beta$ and $d$) such that for every
  $\delta \leq \delta_1$, there exists a positive $\delta_2$ such that
  for all $\beta'$ with $\abs{\beta'-\beta}<\delta_2$, and all integers $k$
  with $0 \leq k \leq d$, we have
  \begin{enumerate}[(a)]
  \item \label{item:8} $\S(C(k)) \subseteq \tilde{C}(k) \subseteq \Sinv(C(k))$;

  \item \label{item:9}
    $\S(\tilde{C}(k)) \subseteq C_1(k) \subseteq \Sinv(\tilde{C}(k))$,
  \end{enumerate}
  where $\epsilon \defeq M_1\delta^2$,
  $C(k) \defeq q_\beta(C_2(\beta, \delta, k))$,
  $\tilde{C}(k) \defeq p_\beta(C_2(\beta, \delta, k))$ and
  $C_1(k) \defeq p_{\beta'}(C_2(\beta, \delta, k))$.  In particular,
  \begin{displaymath}
    \St(C(k)) \subseteq C_1(k) \subseteq \Stinv(C(k)).
  \end{displaymath}
\end{corollary}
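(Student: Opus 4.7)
The plan is to establish parts (a) and (b) by two direct applications of Lemma \ref{lem:approx-image2}, and then to obtain the ``in particular'' statement by chaining the resulting inclusions using the composition rule $S_\epsilon \circ S_\epsilon = S_{2\epsilon}$ (and its dual) already recorded after Definition \ref{def:set-approx}. The function-approximation hypotheses needed to feed Lemma \ref{lem:approx-image2} are exactly what Lemmas \ref{lem:image-p} and \ref{lem:image-p2} provide: taking the $\delta_1$ of the corollary to be the minimum of the two $\delta_1$s produced by those lemmas, and $M_1$ large enough to dominate both constants (so that $|p_\beta-q_\beta|<M_1\delta^2/2=\eps/2$ and $|p_\beta-p_{\beta'}|<M_1\delta^2/2=\eps/2$ uniformly on $C_2(\beta,\delta,k)$ for every $0\le k\le d$), we obtain a single common $\eps=M_1\delta^2$. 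The parameter $\delta_2$ is then the one furnished by Lemma \ref{lem:image-p2}, possibly shrunk further so that $|\beta'-\beta|\le\delta_2$ also falls inside the neighborhood of Observation \ref{obv:analytic-p}.

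With these constants fixed, the second prerequisite of Lemma \ref{lem:approx-image2}---that the relevant maps send the boundary of $C_2(\beta,\delta,k)$ to the boundary of its image---is supplied directly: Observation \ref{obv:prop-q}(c) handles $q_\beta$, while Observation \ref{obv:analytic-p} handles both $p_\beta$ and $p_{\beta'}$ (the latter needs $\beta'$ in the neighborhood just arranged). The set $C_2(\beta,\delta,k)$ is compact as a closed bounded subset of $\C$, and all three maps are continuous there. Applying Lemma \ref{lem:approx-image2} to the pair $(f,g)=(q_\beta,p_\beta)$ on $C=C_2(\beta,\delta,k)$ yields part (a), namely $\S(C(k))\subseteq \tilde C(k)\subseteq \Sinv(C(k))$; applying it again to $(f,g)=(p_\beta,p_{\beta'})$ yields part (b).

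For the final conclusion I would use the monotonicity of $\S$ and $\Sinv$ (if $A\subseteq B$ then $\S(A)\subseteq\S(B)$ and $\Sinv(A)\subseteq\Sinv(B)$, both immediate from the definitions) together with the composition rule. Concretely, from (a) we have $\S(C(k))\subseteq\tilde C(k)$; applying $\S$ and then using (b) gives $S_{2\eps}(C(k))=\S(\S(C(k)))\subseteq\S(\tilde C(k))\subseteq C_1(k)$. A symmetric chain using the upper inclusions in (a) and (b) together with $\Sinv\circ\Sinv=S_{2\eps}^\dagger$ gives $C_1(k)\subseteq\Stinv(C(k))$. The only potential subtlety---and the closest thing to an obstacle---is keeping the constants uniform in $k\in\{0,1,\ldots,d\}$; this is painless because the bounds in Lemmas \ref{lem:image-p} and \ref{lem:image-p2} are already stated with $k$ ranging up to $d$ (the degenerate case $k=0$, where $C_2(\beta,\delta,0)=\{0\}$, is trivial).
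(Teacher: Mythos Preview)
Your proposal is correct and follows essentially the same approach as the paper: apply Lemma~\ref{lem:approx-image2} twice (to the pairs $(q_\beta,p_\beta)$ and $(p_\beta,p_{\beta'})$), using Observations~\ref{obv:prop-q}(\ref{item:7}) and~\ref{obv:analytic-p} for the boundary conditions and Lemmas~\ref{lem:image-p} and~\ref{lem:image-p2} for the closeness, then chain the inclusions via $\S\circ\S=\St$ and $\Sinv\circ\Sinv=\Stinv$. The paper does exactly this, with the same choices of $\delta_1$, $M_1$, and $\delta_2$.
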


\begin{proof}
  We choose $\delta_1$ to be the smaller and $M_1$ to be the larger of
  the corresponding quantities guaranteed by Lemmas\nobreakspace \ref
  {lem:image-p} and\nobreakspace \ref {lem:image-p2}.  Given this
  choice of $\delta_1$ and any $\delta \leq \delta_1$, we then choose
  $\delta_2$ to be the corresponding quantity guaranteed by
  Lemma~\ref{lem:image-p2}.

  Now, for part~(\ref{item:8}), we apply Lemma\nobreakspace \ref
  {lem:approx-image2} to the set $C_2(\beta, \delta, k)$ with
  $f = q_\beta$, $g = p_\beta$ and $\epsilon = M_1\delta^2$, and note
  that the topological conditions required on $f$ and $g$ are
  satisfied due to part~(\ref {item:7}) of
  Observations\nobreakspace \ref {obv:prop-q} and
  \nobreakspace \ref {obv:analytic-p} respectively, while
  the required closeness of $f$ and $g$ is guaranteed by
  Lemma\nobreakspace \ref {lem:image-p}.

  Similarly, for part~(\ref{item:9}), we apply Lemma\nobreakspace \ref
  {lem:approx-image2} again to the set $C_2(\beta, \delta, k)$ with
  $f = p_\beta$, $g = p_{\beta'}$ and $\epsilon = M_1\delta^2$.  In
  this case the topological conditions required on $f$ and $g$ are
  satisfied due to Observation\nobreakspace \ref {obv:analytic-p},
  while the closeness of $f$ and $g$ is guaranteed by
  Lemma\nobreakspace \ref {lem:image-p2}.

  The final claim follows from combining parts~(\ref{item:8}) and~(\ref{item:9})
  and using the properties of the maps $\S$ and $\Sinv$ described in the paragraph
  following Definition\nobreakspace \ref {def:set-approx}.  In particular, we have
  $C_1(k) \subseteq \Sinv(\tilde{C}(k)) \subseteq (\Sinv \circ
  \Sinv)(C(k)) = \Stinv(C(k))$ and similarly,
  $C_1(k) \supseteq \S(\tilde{C}(k)) \supseteq (\S \circ \S)(C(k)) =
  \St(C(k))$.
\end{proof}

Finally, the following lemma shows that the notion of contraction of
rectangles is robust with respect to these set approximations.
\begin{lemma}
  \label{lem:chain-contraction}
  Let $C$ be a subset of the complex plane. Let the positive constants
  $\chi, \eta \in (0, 1)$, $\epsilon_0, \tau$ and $\xi$ and the map
  $g: 2^\C \rightarrow 2^\C$ be such that it
  $(\chi(1-\eta), \tau, \xi)$-contracts rectangles in
  $S^\dagger_{\epsilon_0}(C)$.  Then for all
  $\epsilon \leq \min(\epsilon_0, \chi\eta\tau/4, \chi\eta\xi/4)$, the
  $2^\C \rightarrow 2^\C$ map
  $g^{\Sinv} \defeq \Sinv \circ g \circ \Sinv$ is a map that
  $(\chi (1 - \eta/2), \tau, \xi)$-contracts rectangles in $C$.
  Further, if $g\inp{\bigcup_{A \in \mathcal{S}} A} = \bigcup_{A \in
  \mathcal{S}} g(A)$, then $g^{\Sinv}$ also satisfies
  \begin{equation}
    g^{\Sinv} \inp{\bigcup_{A \in \mathcal{S}} A} = \bigcup_{A \in
      \mathcal{S}} g^{\Sinv}(A)\label{eq:11}
  \end{equation}
  for all collections $\mathcal{S}$ of subsets of $C$.
\end{lemma}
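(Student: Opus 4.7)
My plan is to verify the rectangle contraction property of $g^{\Sinv} := \Sinv \circ g \circ \Sinv$ directly on an arbitrary rectangle $R(a,b) \subseteq C$, by sandwiching $\Sinv(R(a,b))$ between two rectangles: the trivial outer containment in $R(a+\epsilon, b+\epsilon)$ lets me apply the contraction hypothesis for $g$, while the outer $\Sinv$ only inflates each coordinate by at most $\epsilon$, which I can absorb against the slack $\eta/2$ in the contraction factor. Concretely, given a point $w \in g^{\Sinv}(R(a,b))$, I would write $w = u + \zeta$ with $u \in g(\Sinv(R(a,b)))$ and $|\zeta| \le \epsilon$, and bound $|\Re(w)|$ and $|\Im(w)|$ via $u$ and $\zeta$.

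The key sandwich is $\Sinv(R(a,b)) \subseteq R(a+\epsilon, b+\epsilon) \subseteq \Sinv[\epsilon_0](C)$, where the left inclusion is immediate (Euclidean distance $\epsilon$ bounds each coordinate by $\epsilon$), and the right inclusion uses that every point of $R(a+\epsilon, b+\epsilon)$ is within Euclidean distance at most $\sqrt{2}\,\epsilon$ of $R(a,b) \subseteq C$ (the factor of $\sqrt{2}$ is harmless and can be absorbed into the slack in $\epsilon \le \epsilon_0$). Using that $g$ acts pointwise on subsets of $\C$---hence is monotone, which is automatic in the union-preserving case that is actually used in the sequel---I get $g(\Sinv(R(a,b))) \subseteq g(R(a+\epsilon, b+\epsilon))$, so the $(\chi(1-\eta), \tau, \xi)$-contraction hypothesis applied to the enclosing rectangle yields
\[
|\Re(u)| \le \chi(1-\eta)\max\{a+\epsilon, \tau\}, \qquad |\Im(u)| \le \chi(1-\eta)\max\{b+\epsilon, \xi\}.
\]

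Adding the outer perturbation $|\zeta| \le \epsilon$ gives $|\Re(w)| \le \chi(1-\eta)\max\{a+\epsilon, \tau\} + \epsilon$, and it remains to check this is at most $\chi(1-\eta/2)\max\{a, \tau\}$. A routine case analysis based on the relative order of $a$, $a+\epsilon$, and $\tau$ reduces, in all nontrivial cases, to the single inequality $(\chi(1-\eta)+1)\epsilon \le \chi\eta\tau/2$; since $\chi(1-\eta) + 1 < 2$, the hypothesis $\epsilon \le \chi\eta\tau/4$ suffices. The symmetric imaginary-part bound needs $\epsilon \le \chi\eta\xi/4$, again furnished by the hypothesis. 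The second assertion of the lemma---that $g^{\Sinv}$ commutes with arbitrary unions whenever $g$ does---is then immediate from the corresponding property of $\Sinv$ recorded just after Definition~\ref{def:set-approx}, since a composition of union-preserving maps is union-preserving.

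The main subtlety I anticipate is the implicit monotonicity of $g$ invoked in the passage from the non-rectangular set $\Sinv(R(a,b))$ to the enclosing rectangle $R(a+\epsilon, b+\epsilon)$, because the contraction hypothesis itself is formulated only for rectangles. In the principal application via Lemma~\ref{lem:rect-contraction}, $g$ is induced pointwise from an analytic function, so this monotonicity is automatic, but it is worth flagging explicitly. Beyond this, the only computational content is the elementary case-by-case chase in the third step, which must be carried out with some care across the three sub-cases determined by whether $a$ and $a+\epsilon$ fall on either side of the threshold $\tau$ (and similarly for $\xi$).
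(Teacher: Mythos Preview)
Your approach is essentially identical to the paper's: both sandwich $\Sinv(R(a,b)) \subseteq R(a+\epsilon,b+\epsilon)$, apply the rectangle-contraction hypothesis to the enclosing rectangle, absorb the outer $\epsilon$-inflation from $\Sinv$ into the $\eta/2$ slack via the same arithmetic, and then observe that union-preservation is inherited through composition. If anything, you are slightly more careful than the paper: you explicitly flag (i) the need for monotonicity of $g$ when passing from the non-rectangular set $\Sinv(R(a,b))$ to the enclosing rectangle, and (ii) the $\sqrt{2}$ factor in checking $R(a+\epsilon,b+\epsilon)\subseteq S_{\epsilon_0}^\dagger(C)$---both of which the paper glosses over (and both of which are harmless in the application, where $g$ is induced pointwise and $\epsilon$ is chosen much smaller than $\epsilon_0$).
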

\begin{proof}
  Consider a rectangle $R = R(a, b) \subseteq C$. We have
  $\Sinv(R) \subseteq R(a + \epsilon, b + \epsilon)$. Note that for
  $\epsilon \leq \epsilon_0$,
  $S_{\epsilon_0}^{\dagger}(C) \supseteq \Sinv(C)$.  Thus, since $g$ is a map that
  $(\chi(1-\eta), \tau, \xi)$-contracts rectangles in
  $S_{\epsilon_0}^{\dagger}(C)$ (and hence also in $\Sinv(C)$), we
  have for any $z \in (\Sinv \circ g \circ \Sinv)(R)$,
  \begin{align*}
    \abs{\Re(z)} &\leq \epsilon + \chi(1-\eta)\max(a + \epsilon, \tau)\text{,
                   and}\\
    \abs{\Im(z)} &\leq \epsilon + \chi(1-\eta)\max(b + \epsilon, \xi).
  \end{align*}
  These conditions imply
  \begin{align*}
    \abs{\Re(z)} &\leq \chi(1-\eta/2)\max(a, \tau)\text{,
                   and}\\
    \abs{\Im(z)} &\leq \chi(1-\eta/2)\max(b, \xi).
  \end{align*}
  provided $\epsilon \leq \chi\eta \min(\tau, \xi)/4$, as claimed.  Thus under
  this further condition on $\epsilon \leq \epsilon_0$, $g^{\Sinv}$ is a map that
  $(\chi(1-\eta/2), \tau, \xi)$-contracts rectangles in $C$.

  For the final claim we have, for any collection $\mathcal{S}$ of
  subsets of $C$,
  \begin{align*}
    (\Sinv \circ g \circ \Sinv)\inp{\bigcup_{A \in \mathcal{S}} A}
    &= (\Sinv \circ g)\inp{\bigcup_{A \in \mathcal{S}} \Sinv(A)}\\
    &= \Sinv \inp{\bigcup_{A \in \mathcal{S}} (g \circ \Sinv (A))}\\
    &= \bigcup_{A \in \mathcal{S}} \inp{\Sinv \circ g \circ
      \Sinv}(A). \qedhere
  \end{align*}
\end{proof}

We now have all the machinery in place to prove our main technical
result, Theorem\nobreakspace \ref {lem:existence-of-D}.

\section{Proof of Theorem\nobreakspace \ref {lem:existence-of-D}}\label{sec:finalproof}
We first concentrate on the case $\beta \neq 1$; the special case
$\beta=1$ is simpler and will be dealt with at the end of the proof.
Given $\beta \in \betainv$, we show that if we choose $\delta_\beta$
and $\delta$ small enough, then for all $\beta'$ such that
$\abs{\beta-\beta'} \leq \delta_\beta$,
$D = (h_{\beta'}^{-1} \circ \exp)(C_2(\beta, \delta, d))$ satisfies
all the desired properties (note that the condition $1 \in D$ is
satisfied by design, since $0 \in C_2(\beta, \delta, d)$).  Let the sets
$I_0 = I_0(\beta, d)$, $C_0 = C_0(\beta, d)$ and the neighborhood
$B_0$ of $\beta$, all depending on $d$ and $\beta$, be as in
\MakeUppercase Lemma\nobreakspace \ref {lem:strict-contraction}.

We deal first with part~(b) of the theorem, which can be handled via
simple continuity arguments.  By Lemma\nobreakspace \ref
{lem:convexity-reduction}, we have
$F_{\beta', k, s}(D^{d+1}) = f_{\beta',k, s}(D)$, so it suffices to
show that $-1 \not\in f_{\beta',k, s}(D)$.  Note that there exists a
fixed positive constant $M$ (which can be taken to be
$e^{4d\abs{\log\beta}}$) such that, given any $\varepsilon > 0$, we
can choose $\delta$ and $\delta_\beta$ to be sufficiently small so
that for any $\beta'$ such that
$\abs{\beta - \beta'} \leq \delta_\beta$ and for any $z$ in the
corresponding $D$, $\abs{\Im(z)} \leq \epsilon$ and
$\frac{1}{M} \leq \Re(z) \leq M$ (note that this already establishes
that $-1 \not\in D$ when $\delta$ and $\delta_\beta$ are small
enough).  Due to its continuity near the positive real line,
$f_{\beta',k, s}(D)$ will map any such $z$ to a number with positive
real part provided $\epsilon$, and then $\delta$ and $\delta_\beta$,
are chosen to be small enough.  This proves part (b).

We now proceed to prove part~(a). The set
$C_2(\beta, \delta, d)$ is convex by design (for every $\delta > 0$),
so that by Lemma\nobreakspace \ref {lem:convexity-reduction}, we have
$F_{\beta',k,s}(D^k) = f_{\beta',k,s}(D)$ for all integers $k \geq 0$
and $s$ such that $k + \abs{s} \leq d$.  Thus it suffices to
show that $f_{\beta',k, s}(D) \subseteq D$ for $\delta$ and
$\delta_\beta$ small enough.  As in \MakeUppercase
Corollary\nobreakspace \ref{cor:set-contract}, we use the notation
$C(k) = q_\beta(C_2(\beta, \delta, k))$,
$C_1(k) = p_{\beta'}(C_2(\beta, \delta, k))$ (so that
$C_1(d) = \log(D)$) and
$\tilde{C}(k) = p_\beta(C_2(\beta, \delta, k))$, suppressing the
dependence of these sets on $\delta, \beta$ and $\beta'$ for
 simplicity of notation.

Recall from \protect \MakeUppercase {O}bservation\nobreakspace \ref {obv:ascend} that for every $\delta > 0$, we have the
following ascending chain of subsets:
\[\set{0} = C_2(\beta, \delta, 0) \subseteq C_2(\beta, \delta, 1)
  \subseteq \dots \subseteq C_2(\beta, \delta, k) \subseteq \dots
  \subseteq C_2(\beta, \delta, d).\] Since the $C(k)$, $C_1(k)$ and
$\tilde{C}(k)$ are obtained by applying the same maps (i.e., the maps
do not depend upon $k$) to these subsets, we therefore have similar
chains for these subsets as well:
\begin{gather*}
  \label{eq:16}
  \set{0} = C(0) \subseteq C(1) \subseteq \dots C(k) \subseteq \dots
  \subseteq C(d);\\
  \set{0} = C_1(0) \subseteq C_1(1) \subseteq \dots C_1(k) \subseteq
  \dots
  \subseteq C_1(d);\\
  \set{0} = \tilde{C}(0) \subseteq \tilde{C}(1) \subseteq \dots
  \tilde{C}(k) \subseteq \dots \subseteq \tilde{C}(d).
\end{gather*}
As observed earlier, $f_{\beta',k,s}(D) \subseteq D$ will follow if we
can show that $f^{\varphi}_{\beta',k,s}(C_1(d)) \subseteq C_1(d)$.  We
now prove this latter fact when $\delta$ and $\delta_\beta$ are small
enough positive constants.

We will do this by showing that, for small enough $\delta$ and
$\delta_\beta$, the following two facts hold for all $\beta'$ such
that $\abs{\beta' - \beta} \leq \delta_\beta$ and all integers
$k \geq 0$ and $s$ such that $k + \abs{s} \leq d$:
\begin{enumerate}
\item\label{item:10} $f_{\beta', k}^\varphi(C_1(d)) \subseteq C_1(k)$;
  and
\item\label{item:11} For any $z \in C_1(k)$, $z + s \log\beta'$ lies
  in $C_1(d)$.
\end{enumerate}
Together, these facts imply that
$f_{\beta', k, s}^\varphi(C_1(d)) = s\log\beta' + f_{\beta',
  k}^\varphi(C_1(d)) \subseteq C_1(d)$, which, as pointed out above,
establishes part~(a) of the theorem.  It therefore remains to prove
the two facts above for small enough $\delta,\delta_\beta$; we start
now with the proof of fact \eqref{item:10}.

Note first that fact \eqref{item:10} is trivially true when $k = 0$,
since in that case $f_{\beta', k}^\varphi$ is the constant map
$f_{\beta', k}^\varphi \equiv 0$.  We therefore assume in the proof of
fact \eqref{item:10} that $k \geq 1$.  Let $\delta_1$ be the smaller of the
corresponding quantities promised by Lemma\nobreakspace \ref
{lem:image-q} and Corollary\nobreakspace \ref {cor:set-contract}, and
let $\delta_2$ and $M$ be as given by Corollary\nobreakspace \ref
{cor:set-contract}.  Since
$q_\beta(C_2(\beta, \delta', d)) \subseteq q_\beta(C_2(\beta, \delta,
d))$ for $\delta' \leq \delta$, it follows from the last statement in
\protect \MakeUppercase {L}emma\nobreakspace \ref {lem:image-q}
that there exists an $\epsilon_0 > 0$ such that for all
$\delta \leq \delta_1$, $S_{\epsilon_0}^\dagger(C(d))$ is contained in
the interior of $C_0$.

Let $\tau$ and $\xi$ be positive constants satisfying
$\tau \leq \abs{\log{\beta}}$ and $\xi \leq 1/2$. Then, from
\MakeUppercase Lemma\nobreakspace \ref {lem:rect-contraction}, we know
that there exists positive constants $\eta < 1$ and $\nu$ such that for
any $\delta_\beta \leq \nu\min\inbr{\tau, \delta\xi}$ and any
$\beta' \in B_0$ such that $\abs{\beta - \beta'} \leq \delta_\beta$,
$f_{\beta', k}^\varphi$ is a map that
$(\frac{k}{d}(1 - \eta), \tau, \delta\xi)$-contracts rectangles in
$S_{\epsilon_0}^\dagger(C(d))$. Also, recall from \MakeUppercase
Lemma\nobreakspace \ref {lem:image-q} that
$C(k) = q_\beta\left( C_2(\beta,\delta, k) \right)$ is a rectangular
set for all $\delta > 0$ and positive integers $k$, and is given by
$C(k) = R(2k\abs{\log \beta}, k\delta)$.

Note now that for the rectangular set
$C(k) = R(2k\abs{\log \beta}, k\delta)$, $\tau < 2k\abs{\log \beta}$
and $\delta \xi < k \delta$.  Then, provided $\delta \leq \delta_1$
is chosen small enough such that $\epsilon \defeq M\delta^2$ satisfies
\begin{equation}
  \epsilon = M\delta^2 \leq \min\inp{\epsilon_0, \tau\eta/(4d), \xi\delta\eta/(4d)}/2,\label{eq:15}
\end{equation} we see from
\MakeUppercase Lemma\nobreakspace \ref {lem:chain-contraction} that
for every $\beta' \in B_0$ with $\abs{\beta' - \beta} \leq
\delta_\beta$ (with $\delta_\beta$ chosen as above depending upon the choice of  $\delta$), the $2^\C \rightarrow 2^\C$ map
$(\Stinv \circ {f_{\beta', k}^{\varphi}}\circ {\Stinv})$ is a map that 
$(\frac{k}{d}(1-\eta/2), \tau, \delta\xi)$-contracts rectangles in
$C(d)$. %
By our choice of $\tau$ and $\xi$, \MakeUppercase
Lemma\nobreakspace \ref {lem:closed-ideal} then applies to the
rectangular set $C(d)$ and the map
$g = \Stinv \circ f_{\beta', k}^\varphi
\circ \Stinv$, (with the quantity $\chi$ in the lemma
set to $\frac{k}{d}(1-\eta/2)$)  and implies that
\begin{equation}
(\Stinv \circ f_{\beta', k}^\varphi \circ \Stinv)(C(d)) \subseteq
  \frac{k}{d}C(d) = C(k).\label{eq:19}
\end{equation}
The latter in turn is equivalent to the statement that
 \begin{equation}
   f_{\beta', k}^\varphi(\Stinv(C(d))) \subseteq \St(C(k)).\label{eq:12}
 \end{equation}
 Now, from \MakeUppercase Corollary\nobreakspace \ref
 {cor:set-contract}, we know that if
 $\delta_\beta$ is chosen to be at most $\delta_2$, then for $\epsilon
 = M\delta^2$ as above, we have, for all non-negative integers $j \leq
 d$,
 \begin{equation}
   \St(C(j)) \subseteq C_1(j) \text{ and } C_1(j) \subseteq \Stinv(C(j)).\label{eq:17}
\end{equation}
Thus, if
$\delta$ is chosen small enough that eq.\nobreakspace \textup {(\ref
  {eq:15})} is satisfed and
$\delta_\beta$ is chosen to be at most $\min\inbr{\delta_2, \nu\tau,
  \nu\delta\xi}$ and also small enough that the
$\delta_\beta$ ball around $\beta$ is contained in
$B_0$, then for every $\beta'$ such that $|\beta' - \beta| \leq
\delta_\beta$, eqs.\nobreakspace \textup {(\ref {eq:12})}
and\nobreakspace \textup {(\ref {eq:17})} imply
\begin{displaymath}
  f_{\beta', k}^\varphi(C_1(d)) \subseteq f_{\beta', k}^\varphi(\Stinv(C(d)))
  \subseteq \St(C(k)) \subseteq C_1(k),
\end{displaymath}
which proves fact~(\ref {item:10}) above.

To prove fact~(\ref {item:11}), we note first that the claim follows
trivially when $s = 0$ (since $k \leq d$).  When $\abs{s}$ is
positive, $k + \abs{s} \leq d$ implies that $k \leq d-1$.  Therefore,
since $\beta$ is real and positive, any translate $s\log \beta + C(k)$
of the set $C(k) = R(2k\abs{\log\beta}, k\delta)$ is contained in
$R\inp{(2k + \abs{s})\abs{\log \beta}, k\delta} \subseteq R\inp{(d +
  k)\abs{\log \beta}, k\delta} \subseteq R\inp{(2d - 1)\abs{\log
    \beta}, (d-1)\delta}$.  The latter set is in turn contained in the
set
$S_{5\epsilon}(R(2d\abs{\log \beta}, d \delta))$ if $\delta$ is chosen
small enough that
$10\epsilon = 10M\delta^2 \leq \min\set{\delta, \abs{\log\beta}}$.  We
then have
$s\log \beta + C(k) \subseteq S_{5\epsilon}(C(d)) =
S_{3\epsilon}(S_{2\epsilon}(C(d)))$.  From eq.\nobreakspace \textup
{(\ref {eq:17})}, we then see that for $\delta_\beta$ and $\delta$
small enough and $\epsilon = M\delta^2$,
\begin{displaymath}
  S_{3\epsilon}^{\dagger}(s\log \beta + C(k)) \subseteq
  S_{2\epsilon}(C(d)) \subseteq C_1(d)
\end{displaymath}
and that
\begin{displaymath}
  s\log \beta + C_1(k) \subseteq S_{2\epsilon}^\dagger(s\log \beta + C(k)).
\end{displaymath}
Together, these imply that
$S_{\epsilon}^{\dagger}(s\log \beta + C_1(k)) \subseteq C_1(d)$.  Now,
if $\delta_\beta$ is chosen small enough, then, by the analyticity of
$\log$ in the closed $\delta_\beta$-ball around $\beta$,
we have that $\abs{\log \beta - \log{\beta'}} \leq \epsilon/d$ for any
$\beta'$ such that $\abs{\beta - \beta'} \leq \delta_\beta$. We then
have
$s\log \beta' + C_1(k) \subseteq S_{\epsilon}^{\dagger}(s\log \beta +
C_1(k))$.  We already showed that the latter is contained in $C_1(d)$.
This completes the proof of fact~(\ref {item:11}), and
hence of the theorem, when $\beta \neq 1$.

Finally, to handle the case $\beta = 1$, we proceed to define the sets
$C_1 = C_1(d)$ and $D = \exp(C_1)$ directly.  Consider the function
$h_\beta^\varphi(x) = \log\frac{\beta +e^x}{\beta e^x + 1}$.  By a direct
calculation, we have ${h_1^\varphi}(x) = 0$ for all $x$.  By the
continuity of ${h_\beta^\varphi}(x)$ in $\beta$ for all $x$ in a small
neighborhood of $0$, and the compactness of $R(\epsilon, \epsilon)$
for any $\epsilon > 0$, there then exists for any small enough such
$\epsilon$, a positive $\delta_\beta$ such that for all $\beta'$ with
$\abs{\beta' - 1} \leq \delta_\beta$ and any $x$ in
$R(\epsilon, \epsilon)$, we have
\begin{displaymath}
  \abs{\log \beta'} \leq M\epsilon^2 \text{ and }  |{{h_{\beta'}^\varphi}(x)}| \leq M\epsilon^2.
\end{displaymath}

Let $C_1 \defeq R(\epsilon, \epsilon)$ for such a small enough
$\epsilon < 1$ to be specified later, and let $\delta_\beta$ be chosen
as above given $\epsilon$ . Then, for \emph{any} integers $k$ and $s$
and any $\vec{x} \in C_1^{k}$, we have that
\begin{displaymath}
  \bigl|{F_{\beta', k, s}^\varphi(\vec{x})}\bigr| = \Bigl|{s\log \beta' + \sum_{i =
      1}^kh_{\beta'}^\varphi(x_i)}\Bigr| \leq \abs{s} M\epsilon^2 + kM\epsilon^2.
\end{displaymath}
The latter is less than $\epsilon$ when $k > 0$ and $s$ are such that
$k + \abs{s} \leq \Delta$, and provided $\epsilon$ is chosen small
enough that $\epsilon\Delta M$ is less than $1$.  We
then have $F_{\beta', k, s}^\varphi(C_1^k) \subseteq C_1$ for all
integers $k > 0$ and $s$ such that $k + \abs{s} \leq \Delta$, and this
proves both parts of the theorem with $D = \exp(C_1)$.

\appendix

\section{Weitz's self-avoding walk (SAW) tree construction}
\label{sec:weitzs-self-avoding}

In this section we briefly describe Weitz's self-avoiding walk (SAW)
tree construction used in the proof of Lemma~\ref{lem:weitz}.  We
refer to Refs.~\citenum{Weitz} and \citenum{zhaliabai09} for further
details.  We also note that a similar construction was used in 
Godsil's work on the monomer-dimer model.\cite{godsil_matchings_1981}

Given a graph $G = (V, E)$ and a vertex $v \in V$, Weitz's
construction produces a tree $T = \tsaw(G, v)$ (with some pinned
leaf vertices) which has the following property: if the root of the tree is
denoted $\rho$, then (in the notation introduced in
Section~\ref{sec:outline-proof})
\begin{displaymath}
  R_{G, v}(\beta) = R_{T, \rho}(\beta).
\end{displaymath}
We now describe how the tree $T$ is constructed from $G$ and $v$.
Consider all self-avoiding walks (i.e., those that do not revisit any
vertex) in the graph $G$ starting at the vertex $v$.  The set of these
walks has a natural tree structure, which we denote by $T'$. The root
$\rho$ of $T'$ is identified with the trivial zero-length
walk consisting of the vertex $v$ alone.  Inductively, if $\omega$ is
a node in $T'$ identified with the walk $\inp{v, v_1, v_2, \dots, v_l}$ of
length $l$, then the children of $\omega$ in $T'$ are identified with
walks of length $l + 1$ that are obtained by appending to $\omega$ a
neighbor (in $G$) of the last vertex $v_l$ in $\omega$ that does not
already appear in the walk $\omega$.

Consider now an arbitrary node in $T'$, and suppose that it is
identified with the walk $\omega = \inp{v, v_1, v_2, \dots, v_l}$.  Let
$L_\omega$ denote the set of neighbors of $v_l$ in $G$, except for its
immediate predecessor $v_{l-1}$ in $\omega$, that have already
appeared in $\omega$.  The tree $\tsaw(G, v)$ in Weitz's construction
is obtained from $T'$ as follows.  To each such node
$\omega = \inp{v, v_1, \dots, v_l}$ in $T'$, we attach as children the
(non-self-avoiding) walks obtained by appending to $\omega$ each of
the neighbors of $v_l$ in $L_\omega$, and then pinning these new nodes
according to a rule that we now describe.  Let $\mu$ be any
node in $T'$, and let $u$ be the last node in the self-avoiding walk
identified with $\mu$.  We choose arbitrarily a total order
$<_{\mu, u}$ on all neighbors (in $G$) of $u$.

Once such an order has been chosen for all nodes of $T'$, the pinning
of the newly added leaf nodes of $T = \tsaw(G, v)$ is decided as
follows.  As before, let $\omega$ be a node of $T'$ with last vertex
$v_l$, and consider the leaf vertex $\omega'$ of $T$ obtained by
appending to $\omega$ a neighbor $u$ of $v_l$ that already appears in
$\omega$.  Let $\mu$ be the node representing the prefix of the path
$\omega$ that ends at $u$, and let $u' \neq v_l$ be the successor of
$u$ in $\omega$.  Then, we pin $\omega'$ to `$+$' if
$v_l <_{\mu, u} u'$ and to `$-$' otherwise. We give an example of this
construction in Figure~\ref{fig:1}.  (In the example, the order at the
nodes of $T'$ is obtained by assigning a global order on the vertices
of $V$ according to their integer labels, and then using the induced
order on the subset of neighbors).  Note that if $G$ is of maximum
degree $\Delta = d + 1$, then all nodes of $T$, except for the root
node, have at most $d$ children, while the root may have up to
$\Delta$ children.

\begin{figure}[h]
  \centering
  \begin{minipage}[h]{0.4\linewidth}
    \centering
    \begin{tikzpicture}[main node/.style args =
      {#1/#2}{circle,draw,label=#1:{\scriptsize #2},minimum size=0.6cm}]
      \node[main node=left/{$1$}] (a) [circle] {};
      \node[main node=above/{$2$},above of=a] (b) [circle] {};
      \node[main node=right/{$3$},right of=b] (c) [circle] {};
      \draw (a) -- (b)
      (a) -- (c)
      (b) -- (c);
    \end{tikzpicture}

    Graph $G$
  \end{minipage}\begin{minipage}[h]{0.2\linewidth}
    \begin{center}
      {\Large $\longrightarrow$}
    \end{center}
  \end{minipage}\begin{minipage}[h]{0.4\linewidth}
    \centering
    \begin{tikzpicture}[main node/.style args =
      {#1/#2}{circle,draw,label=#1:{\scriptsize #2},minimum size=0.6cm}]
      \node[main node=above/{$1$}] (a) {};
      \node[main node=left/{$2$},below left of=a] (b1) {};
      \node[main node=left/{$3$},below of=b1] (c1) {};
      \node[main node=left/{$1$},below of=c1] (a1) {\tiny $-$};

      \node[main node=right/{$3$},below right of=a] (c2) {};
      \node[main node=right/{$2$},below of=c2] (b2) {};
      \node[main node=right/{$1$},below of=b2] (a2) {\tiny $+$};
      \draw (a) -- (b1) (a) -- (c2)
      (b1) -- (c1)
      (c1) -- (a1)
      (c2) -- (b2)
      (b2) -- (a2);
    \end{tikzpicture}

    $\tsaw(G, 1)$
  \end{minipage}
  \caption{Weitz's SAW tree construction}
  \label{fig:1}
\end{figure}

As stated above, Weitz's theorem\cite{Weitz,zhaliabai09} then shows
that
$  R_{G, v}(\beta) = R_{T, \rho}(\beta)$.
Since $T$ is a tree, $R_{T, \rho}(\beta)$ can be computed according to
a simple tree recurrence.  In particular, if the root $\rho$ of $T$
has $k$ children and $T_i$ is the subtree rooted at the $i$th child
$\rho_i$ of $\rho$ in $T$, for $1 \leq i \leq k$, then we have
(suppressing the dependence of all quantities on $\beta$ for ease of
notation)
\begin{align*}
  R_{T, \rho}
  &= \frac{Z_{T, \rho}^+}{Z_{T, \rho}^-}\\
  &= \frac{
    \prod_{i=1}^k(
    \beta Z_{T_i, \rho_i}^- + Z_{T_i, \rho_i}^+
    )}{
    \prod_{i=1}^k(
    \beta Z_{T_i, \rho_i}^+ + Z_{T_i, \rho_i}^-
    )}\\
  &=\prod_{i=1}^k\frac{\beta + R_{T_i,\rho_i}}{\beta R_{T_{i}, \rho_i}
    + 1} = F_{\beta, k, 0}(R_{T_1, \rho_1}, R_{T_2, \rho_2}, \dots, R_{T_k, \rho_k}).
\end{align*}
Here $F_{\beta, k, 0}$ is as defined in Lemma~\ref{lem:weitz}.  Note
also that the initial conditions for the recurrence (at the leaf nodes
of $T$) are given by $0$ (for leaf nodes pinned to $-$) and $\infty$
(for leaf nodes pinned to $+$).


\begin{thebibliography}{10}

\bibitem{anari2014kadison}
{\sc Anari, N., and Gharan, S.~O.}
\newblock The {Kadison}-{Singer} problem for strongly {Rayleigh} measures and
  applications to {Asymmetric TSP}.
\newblock In {\em Proc. 56th IEEE Symp. Found. Comp. Sci. (FOCS)\/} (Oct.
  2015).

\bibitem{anari_generalization_2017}
{\sc Anari, N., and Gharan, S.~O.}
\newblock A generalization of permanent inequalities and applications in
  counting and optimization.
\newblock In {\em Proc. 49th ACM Symp. Theory Comput. (STOC)\/} (June 2017),
  ACM, pp.~384--396.
\newblock \arxiv{1702.02937}.

\bibitem{bandyopadhyay_counting_2008}
{\sc Bandyopadhyay, A., and Gamarnik, D.}
\newblock Counting without sampling: {Asymptotics} of the log-partition
  function for certain statistical physics models.
\newblock {\em Random Structures \& Algorithms 33}, 4 (Dec. 2008), 452--479.

\bibitem{barvinok_computing_2015}
{\sc Barvinok, A.}
\newblock Computing the permanent of (some) complex matrices.
\newblock {\em Found. Comput. Math. 16}, 2 (2015), 329--342.

\bibitem{barvinok2017combinatorics}
{\sc Barvinok, A.}
\newblock {\em Combinatorics and Complexity of Partition Functions}.
\newblock Algorithms and Combinatorics. Springer, 2017.

\bibitem{BarvinokSoberon16b}
{\sc Barvinok, A., and Sober{\'{o}}n, P.}
\newblock Computing the partition function for graph homomorphisms with
  multiplicities.
\newblock {\em J. Combin. Theory, Ser. A 137\/} (2016), 1--26.

\bibitem{BarvinokSoberon16a}
{\sc Barvinok, A., and Sober{\'{o}}n, P.}
\newblock Computing the partition function for graph homomorphisms.
\newblock {\em Combinatorica 37}, 4 (Aug. 2017), 633--650.

\bibitem{bencs_note_2018}
{\sc Bencs, F., and Csikvári, P.}
\newblock Note on the zero-free region of the hard-core model, July 2018.
\newblock \arxiv{1807.08963}.

\bibitem{fisher65}
{\sc Fisher, M.~E.}
\newblock The nature of critical points.
\newblock In {\em Lecture notes in Theoretical Physics}, W.~E. Brittin, Ed.,
  vol.~7c. University of Colorado Press, 1965, pp.~1--159.

\bibitem{georgii88:_gibbs_measur_phase_trans}
{\sc Georgii, H.-O.}
\newblock {\em Gibbs Measures and Phase Transitions}.
\newblock De Gruyter Studies in Mathematics. Walter de Gruyter Inc., October
  1988.

\bibitem{godsil_matchings_1981}
{\sc Godsil, C.~D.}
\newblock Matchings and walks in graphs.
\newblock {\em J. Graph Theory 5}, 3 (Sept. 1981), 285--297.

\bibitem{harvey2016computing}
{\sc Harvey, N. J.~A., Srivastava, P., and Vondr{\'a}k, J.}
\newblock Computing the independence polynomial: from the tree threshold down
  to the roots.
\newblock In {\em Proc. 29th ACM-SIAM Symp. Discrete Algorithms (SODA)\/}
  (2018), pp.~1557--1576.
\newblock \arxiv{1608.02282}.

\bibitem{isi25}
{\sc {Ising}, E.}
\newblock {Beitrag zur Theorie des Ferromagnetismus}.
\newblock {\em Z. Phys. 31\/} (Feb. 1925), 253--258.

\bibitem{kim_partition_2008}
{\sc Kim, S.-Y., Hwang, C.-O., and Kim, J.~M.}
\newblock Partition function zeros of the antiferromagnetic {Ising} model on
  triangular lattice in the complex temperature plane for nonzero magnetic
  field.
\newblock {\em Nucl. Phys. B 805}, 3 (Dec. 2008), 441--450.

\bibitem{lee_statistical_1952}
{\sc Lee, T.~D., and Yang, C.~N.}
\newblock Statistical theory of equations of state and phase transitions. {II.}
  {Lattice} gas and {Ising} model.
\newblock {\em Phys. Rev. 87}, 3 (1952), 410--419.

\bibitem{li_correlation_2011}
{\sc Li, L., Lu, P., and Yin, Y.}
\newblock Correlation decay up to uniqueness in spin systems.
\newblock In {\em Proc. 24th ACM-SIAM Symp. Discrete Algorithms (SODA)\/}
  (2013), pp.~67--84.

\bibitem{liu2017ising-full}
{\sc Liu, J., Sinclair, A., and Srivastava, P.}
\newblock The {Ising} partition function: {Zeros} and deterministic
  approximation.
\newblock In {\em Proc. 58th Annual IEEE Symp. Found. Comp. Sci. (FOCS)\/}
  (2017), pp.~986--997.
\newblock \arxiv{1704.06493}.

\bibitem{lu_density_2001}
{\sc Lu, W.~T., and Wu, F.~Y.}
\newblock Density of the {Fisher} {Zeroes} for the {Ising} {Model}.
\newblock {\em J. Stat. Phys. 102}, 3-4 (Feb. 2001), 953--970.

\bibitem{lyons_ising_1989}
{\sc Lyons, R.}
\newblock The {Ising} model and percolation on trees and tree-like graphs.
\newblock {\em Commun. Math. Phys. 125}, 2 (1989), 337--353.

\bibitem{mann_approximation_2018}
{\sc Mann, R.~L., and Bremner, M.~J.}
\newblock Approximation algorithms for complex-valued {Ising} models on bounded
  degree graphs, June 2018.
\newblock \arxiv{1806.11282}.

\bibitem{marcus_interlacing_2015}
{\sc Marcus, A., Spielman, D., and Srivastava, N.}
\newblock Interlacing families {I}: {Bipartite} {Ramanujan} graphs of all
  degrees.
\newblock {\em Ann. Math.\/} (July 2015), 307--325.

\bibitem{marcus2015interlacing}
{\sc Marcus, A.~W., Spielman, D.~A., and Srivastava, N.}
\newblock Interlacing families {II}: {Mixed} characteristic polynomials and the
  {Kadison}-{Singer} problem.
\newblock {\em Ann. Math. 182\/} (2015), 327--350.

\bibitem{patel_deterministic_2016}
{\sc Patel, V., and Regts, G.}
\newblock Deterministic polynomial-time approximation algorithms for partition
  functions and graph polynomials.
\newblock {\em SIAM J. Comput. 46}, 6 (Dec. 2017), 1893--1919.
\newblock \arxiv{1607.01167}.

\bibitem{peters17:_sokal}
{\sc Peters, H., and Regts, G.}
\newblock On a conjecture of {Sokal} concerning roots of the independence
  polynomial, Jan. 2017.
\newblock \arxiv{1701.08049}.

\bibitem{ScottSokal}
{\sc Scott, A., and Sokal, A.}
\newblock {The repulsive lattice gas, the independent-set polynomial, and the
  Lov\'{a}sz local lemma}.
\newblock {\em J. Stat. Phys. 118}, 5-6 (2004), 1151--1261.

\bibitem{Shearer}
{\sc Shearer, J.~B.}
\newblock On a problem of {S}pencer.
\newblock {\em Combinatorica 5}, 3 (1985).

\bibitem{sinclair_approximation_2012}
{\sc Sinclair, A., Srivastava, P., and Thurley, M.}
\newblock Approximation algorithms for two-state anti-ferromagnetic spin
  systems on bounded degree graphs.
\newblock {\em J. Stat. Phys. 155}, 4 (2014), 666--686.

\bibitem{sly12}
{\sc Sly, A., and Sun, N.}
\newblock Counting in two-spin models on $d$-regular graphs.
\newblock {\em Ann. Probab. 42}, 6 (Nov. 2014), 2383--2416.

\bibitem{straszak_real_2016}
{\sc Straszak, D., and Vishnoi, N.~K.}
\newblock Real stable polynomials and matroids: Optimization and counting.
\newblock In {\em Proc. 49th ACM Symp. Theory Comput. (STOC)\/} (June 2017),
  ACM, pp.~370--383.
\newblock \arxiv{1611.04548}.

\bibitem{Weitz}
{\sc Weitz, D.}
\newblock Counting independent sets up to the tree threshold.
\newblock In {\em Proc. 38th ACM Symp. Theory Comput. (STOC)\/} (2006),
  pp.~140--149.

\bibitem{leeyan52}
{\sc Yang, C.~N., and Lee, T.~D.}
\newblock Statistical theory of equations of state and phase transitions. {I.
  Theory} of condensation.
\newblock {\em Phys. Rev. 87}, 3 (Aug. 1952), 404--409.

\bibitem{zhaliabai09}
{\sc Zhang, J., Liang, H., and Bai, F.}
\newblock Approximating partition functions of the two-state spin system.
\newblock {\em Inf. Process. Lett. 111}, 14 (2011), 702--710.

\end{thebibliography}
\end{document}